
\documentclass{amsart}
\usepackage[dvips]{graphicx}
\usepackage{epsfig,enumerate,mathrsfs}
\usepackage{color}

\numberwithin{equation}{section}
\newtheorem{theorem}{Theorem}[section]

\newtheorem{lemma}[theorem]{Lemma}

\theoremstyle{definition}
\newtheorem{definition}[theorem]{Definition}
\newtheorem{example}[theorem]{Example}
\newtheorem{exercise}{Exercise}


\newcommand{\G}{\Gamma}

\newcommand{\R}{\mathbb R}
\newcommand{\Z}{\mathbb Z}

\newcommand{\Ee}{\mathcal{E}}

\def\N{\mathbb{N}}
\def\C{\mathbb{C}}

\def\KD3{\mathrm{KD}^3}

\newcommand{\Res}{{\rm Res\,\,}}

\def\beq{\begin{eqnarray}}
\def\eeq{\end{eqnarray}}
\newcommand{\nn}{\nonumber}
\newcommand{\snuu}{\sum_{n=-\infty}^\infty}
\newcommand{\sneu}{\sum_{n=1}^\infty}
\newcommand{\sumne}{\sum_{n=1}^\infty}
\newcommand{\summe}{\sum_{m=1}^\infty}
\newcommand{\sumkn}{\sum_{k=0}^\infty}
\newcommand{\sumnn}{\sum_{n=0}^\infty}
\newcommand{\sumlm}{\sum_{l=-\infty}^\infty}
\newcommand{\res}{\mbox{Res }}
\newcommand{\intl}{\int\limits}
\newcommand{\ep}{\epsilon}
\newcommand{\Cc}{\mathcal{C}}
\newcommand{\Bb}{\mathcal{B}}


\begin{document}

\title[Zeta functions and applications]{Basic zeta functions and some applications in physics}

\author{Klaus Kirsten}
\address{Department of Mathematics\\ Baylor University\\
         Waco\\ TX 76798\\ U.S.A. }
\email{Klaus$\_$Kirsten@baylor.edu}

\maketitle
\section{Introduction}
It is the aim of these lectures to introduce some basic zeta functions and their uses in the areas of the Casimir effect and Bose-Einstein condensation. A brief introduction into these areas is given in the respective sections; for recent monographs on these topics see \cite{bord01-353-1,dalv99-71-463,eliz95b,eliz94b,kirs02b,milt01b,milt04-37-209,peth02b,pita03b}. We will consider exclusively spectral zeta functions, that is zeta functions arising from the eigenvalue spectrum of suitable differential operators. Applications like those in number theory \cite{apos76b,apos90b,dave67b,titc51b} will not be considered in this contribution.

There is a set of technical tools that are at the very heart of understanding analytical properties of essentially every spectral zeta function. Those tools are introduced in Section 2 using the well-studied examples of the Hurwitz \cite{hurw82-27-86}, Epstein \cite{epst03-56-615,epst07-63-205} and Barnes zeta function \cite{barn03-19-426,barn03-19-374}. In Section 3 it is explained how these different examples can all be thought of as being generated by the same mechanism, namely they all result from eigenvalues of suitable (partial) differential operators. It is this relation with partial differential operators that provides the motivation for analyzing the zeta functions considered in these lectures. Motivations come for example from the questions "Can one hear the shape of a drum?" and "What does the Casimir effect know about a boundary?". Finally "What does a Bose gas know about its container?" The first two questions are considered in detail in Section 4. The last question is examined in Section 5 where we will see how zeta functions can be used to analyze the phenomenon of Bose-Einstein condensation. The Conclusions will point towards recent developments for the analysis of spectral zeta functions and their applications.
\section{Some basic zeta functions}
In this section we will construct analytical continuations of basic zeta functions. From these we will determine the meromorphic structure, residues at singular points and special function values.
\subsection{Hurwitz zeta function}
We start by considering a generalization of the Riemann zeta function \beq \zeta_R (s) = \sum_{n=1}^\infty \frac 1 {n^s} . \label{zetarie}\eeq
\begin{definition}
Let $s\in \C$ and $0<a<1$. Then for $\Re s >1 $ the Hurwitz zeta function is defined by \beq \zeta_H (s,a) = \sum_{n=0}^\infty \frac 1 {(n+a)^s}.\nn\eeq
\end{definition}
Clearly, we have that $\zeta_H (s,1) = \zeta _R (s)$. Results for $a=1+b>1$ follow by observing \beq \zeta_H (s,1+b) = \sumnn \frac 1 {(n+1+b)^s} = \zeta _H (s,b) - \frac 1 {b^s}.\nn\eeq
In order to determine properties of the Hurwitz zeta function, one strategy is to express it in term of 'known' zeta functions like the Riemann zeta function.
\begin{theorem}
For $0<a<1$ we have \beq \zeta_H (s,a) = \frac 1 {a^s} + \sumkn (-1)^k \frac{\Gamma (s+k)} {\Gamma (s) k!} a^k \zeta_R (s+k).\nn\eeq
\end{theorem}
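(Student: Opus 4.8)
The plan is to reduce the Hurwitz zeta function to the Riemann zeta function by expanding each summand in a binomial series. First I would peel off the $n=0$ term of $\zeta_H(s,a)=\sum_{n=0}^\infty (n+a)^{-s}$, which contributes precisely $a^{-s}$, and for $n\ge 1$ write $(n+a)^{-s}=n^{-s}\bigl(1+\tfrac an\bigr)^{-s}$. Since $0<a<1\le n$ we have $0<a/n<1$, so the generalized binomial theorem gives $\bigl(1+\tfrac an\bigr)^{-s}=\sum_{k=0}^\infty (-1)^k \frac{\Gamma(s+k)}{\Gamma(s)\,k!}\bigl(\tfrac an\bigr)^k$. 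Substituting and interchanging the two summations yields $\sum_{k=0}^\infty (-1)^k \frac{\Gamma(s+k)}{\Gamma(s)\,k!}\,a^k \sum_{n=1}^\infty n^{-s-k}$, and the inner sum is $\zeta_R(s+k)$ by definition \eqref{zetarie}, which is the claimed identity.

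The one genuine point to verify is that the interchange of summations is legitimate, i.e.\ that the double series is absolutely convergent for $\Re s>1$. For this I would use the elementary bound $\bigl|\frac{\Gamma(s+k)}{\Gamma(s)\,k!}\bigr|=\bigl|\frac{s(s+1)\cdots(s+k-1)}{k!}\bigr|\le\frac{|s|(|s|+1)\cdots(|s|+k-1)}{k!}=\frac{\Gamma(|s|+k)}{\Gamma(|s|)\,k!}$, so that the sum of absolute values is dominated by
\[
\sum_{n=1}^\infty n^{-\Re s}\sum_{k=0}^\infty \frac{\Gamma(|s|+k)}{\Gamma(|s|)\,k!}\Bigl(\frac an\Bigr)^k=\sum_{n=1}^\infty n^{-\Re s}\Bigl(1-\frac an\Bigr)^{-|s|}.
\]
Since $n\mapsto 1-a/n$ is increasing on $n\ge1$, each factor $(1-a/n)^{-|s|}$ is at most the constant $(1-a)^{-|s|}$, and the remaining series $\sum_{n\ge1}n^{-\Re s}=\zeta_R(\Re s)$ converges for $\Re s>1$. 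Hence Fubini/Tonelli applies and the rearrangement is valid on $\Re s>1$, which is exactly the region in which $\zeta_H(s,a)$ and every $\zeta_R(s+k)$ are represented by their defining series. I do not expect a serious obstacle here; the only care needed is in choosing the clean dominating bound above.

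Finally I would add the remark that, although the identity is derived only for $\Re s>1$, it already exhibits the meromorphic continuation of $\zeta_H(\cdot,a)$: the term $a^{-s}$ is entire, each $\zeta_R(s+k)$ is meromorphic with a single simple pole at $s=1-k$, and the factor $\Gamma(s+k)/\Gamma(s)$ vanishes at the non-positive integers, which kills the would-be poles for $k\ge1$; moreover the terms decay geometrically in $k$ (by Stirling, $\Gamma(s+k)/(\Gamma(s)k!)\sim k^{s-1}/\Gamma(s)$, times $a^k$ with $a<1$), so the right-hand side converges locally uniformly away from $s=1$. Thus $\zeta_H(s,a)$ extends to a meromorphic function on $\C$ whose only singularity is a simple pole at $s=1$ with residue $1$, coming from the $k=0$ term — recovering the familiar analytic structure of the Hurwitz zeta function.
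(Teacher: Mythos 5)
Your proof takes exactly the same route as the paper's: split off the $n=0$ term, factor $(n+a)^{-s}=n^{-s}(1+a/n)^{-s}$, expand by the binomial series, and swap the two sums to recover $\zeta_R(s+k)$. The only difference is that you supply the Tonelli-type justification for the interchange (which the paper silently omits) and append the meromorphic-continuation remark, which the paper states separately just after the theorem; both are correct and welcome additions, not a different method.
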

\begin{proof}
Note that for $|z| <1$ we have the binomial expansion \beq (1-z)^{-s} = \sumkn \frac{\Gamma (s+k)} {\Gamma (s) k!} z^k.\nn\eeq
So for $\Re s>1$ we compute
\beq \zeta_H (s,a) &=& \frac 1 {a^s} + \sumne \frac 1 {n^s} \frac 1 {\left( 1 + \frac a n \right)^s} \nn\\
&=& \frac 1 {a^s} + \sumne \frac 1 {n^s} \sumkn (-1)^k \frac{\Gamma (s+k)} {\Gamma (s) k!}  \left( \frac a n \right)^k \nn\\
&=& \frac 1 {a^s} + \sumkn (-1)^k \frac{\Gamma (s+k)} {\Gamma (s) k!} a^k \sumne \frac 1 {n^{s+k}} \nn\\
&=& \frac 1 {a^s} + \sumkn (-1)^k \frac{\Gamma (s+k)} {\Gamma (s) k!} a^k \zeta _R (s+k),\nn\eeq
which is the assertion.
\end{proof}
From here it is seen that $s=1$ is the only pole of $\zeta_H (s,a)$ with $\res \zeta_H (1,a) =1$.

In determining certain function values of $\zeta_H (s,a)$ the following polynomials will turn out to be useful.
\begin{definition}
For $x\in \C$ we define the Bernoulli polynomials $B_n (x)$ by the equation
\beq \frac{ z e^{xz}}{e^z-1} = \sumnn \frac{B_n (x)} {n!} z^n, \quad \quad \mbox{where } |z|<2\pi.\label{bernoullipol}\eeq
\end{definition}
Examples are $B_0 (x) =1$ and $B_1 (x) = x-1/2$. The numbers $B_n (0)$ are called Bernoulli numbers and are denoted by $B_n$. Thus \beq \frac z {e^z-1} = \sum_{n=0}^\infty \frac{B_n }{n!} z^n, \quad \quad \mbox{where } |z|<2\pi.\label{bernoullinum}\eeq
\begin{lemma}\label{lem2.2}
The Bernoulli polynomials satisfy
\begin{enumerate}
\item
\beq B_n (x) = \sum_{k=0}^n { n \choose k} B_k x^{n-k} ,\nn\eeq
\item
\beq B_n (x+1) - B_n (x) = n x^{n-1} \quad \quad \mbox{if }n\geq 1 ,\nn\eeq
\item
\beq (-1)^n B_n (-x) = B_n (x) + n x^{n-1}, \nn\eeq
\item
\beq
B_n (1-x) = (-1)^n B_n (x) .\nn\eeq
\end{enumerate}
\end{lemma}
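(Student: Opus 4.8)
The plan is to derive all four identities from the generating function \eqref{bernoullipol}, namely $g(x,z) := ze^{xz}/(e^z-1) = \sum_{n\ge 0} B_n(x) z^n/n!$, by performing elementary manipulations on $g$ and comparing Taylor coefficients in $z$. The unifying principle is that each assertion corresponds to a simple algebraic relation between copies of $g$ evaluated at shifted or reflected arguments.

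For part (1), I would write $g(x,z) = e^{xz} \cdot \frac{z}{e^z-1}$ and expand each factor as a power series: $e^{xz} = \sum_{m\ge 0} x^m z^m/m!$ and $\frac{z}{e^z-1} = \sum_{k\ge 0} B_k z^k/k!$ by \eqref{bernoullinum}. Multiplying the two series (Cauchy product) and collecting the coefficient of $z^n$ yields $B_n(x)/n! = \sum_{k=0}^n B_k x^{n-k}/(k!(n-k)!)$, which is exactly assertion (1) after multiplying by $n!$. For part (2), the key observation is the telescoping identity
\beq
g(x+1,z) - g(x,z) = \frac{z e^{xz}(e^z - 1)}{e^z-1} = z e^{xz} = \sum_{n\ge 0} \frac{x^n}{n!} z^{n+1} = \sum_{n\ge 1} \frac{n x^{n-1}}{n!} z^n,\nn
\eeq
and comparing the coefficient of $z^n$ for $n\ge 1$ gives $B_n(x+1) - B_n(x) = n x^{n-1}$.

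For part (4), I would use the reflection $z \mapsto -z$ in the argument: compute $g(1-x,-z) = \frac{-z e^{-(1-x)z}}{e^{-z}-1} = \frac{-z e^{(x-1)z}}{e^{-z}-1}$, and multiply numerator and denominator by $e^z$ to get $\frac{-z e^{xz}}{1 - e^z} = \frac{z e^{xz}}{e^z - 1} = g(x,z)$. Hence $\sum_n B_n(1-x)(-z)^n/n! = \sum_n B_n(x) z^n/n!$, and matching coefficients of $z^n$ yields $(-1)^n B_n(1-x) = B_n(x)$, i.e. $B_n(1-x) = (-1)^n B_n(x)$. Finally, part (3) follows by combining (4) and (2): from (4), $B_n(-x) = B_n(1-(1+x)) = (-1)^n B_n(1+x)$, and then (2) with $x$ replaced by $x$ gives $B_n(1+x) = B_n(x) + n x^{n-1}$, so $(-1)^n B_n(-x) = B_n(x) + n x^{n-1}$, as claimed.

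The only point requiring a little care — the main (minor) obstacle — is justifying the term-by-term manipulations: the Cauchy product in (1) and the substitution $z\mapsto -z$ are legitimate only inside a common disk of convergence, but \eqref{bernoullipol} and \eqref{bernoullinum} both hold for $|z| < 2\pi$, and $e^{xz}$ is entire, so all rearrangements are valid on $|z| < 2\pi$ (shrinking to a smaller disk if one wishes to be cautious near the reflected region), and matching Taylor coefficients is then justified by uniqueness of power series expansions. Everything else is routine algebra on the generating function.
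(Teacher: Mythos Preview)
Your proof is correct and follows precisely the approach the paper intends: the paper does not actually supply a proof but leaves Lemma~\ref{lem2.2} as an exercise with the hint to use the generating functions \eqref{bernoullipol} and \eqref{bernoullinum}, which is exactly what you do. Your derivations of (1)--(4) via Cauchy product, the telescoping identity for $g(x+1,z)-g(x,z)$, the reflection $z\mapsto -z$, and the combination of (2) and (4) are all standard and sound.
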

\begin{exercise} Use relations (\ref{bernoullipol}) and (\ref{bernoullinum}) to show the assertions of Lemma \ref{lem2.2}.\end{exercise}
We now establish elementary properties of $\zeta_H (s,a)$.
\begin{theorem}\label{the2.3}
For $\Re s >1$ we have
\beq \zeta_H (s,a) = \frac 1 {\Gamma (s)} \intl_0^\infty t^{s-1} \frac{e^{-at}}{1-e^{-t}} dt.\label{zetahurint}\eeq
Furthermore, for $k\in\N_0$ we have \beq \zeta_H (-k , a) = - \frac{B_{k+1}(a)} {k+1} .\nn\eeq
\end{theorem}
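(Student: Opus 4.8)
The plan is to prove the integral representation (\ref{zetahurint}) first and then use it, via an explicit meromorphic continuation, to read off the values at non-positive integers. For (\ref{zetahurint}) I would begin with the Euler integral $\Gamma(s)=\int_0^\infty t^{s-1}e^{-t}\,dt$ and rescale $t\mapsto(n+a)t$, so that $(n+a)^{-s}\Gamma(s)=\int_0^\infty t^{s-1}e^{-(n+a)t}\,dt$ for each $n\ge 0$. Summing over $n$ and interchanging summation with integration — valid for $\Re s>1$, first for real $s>1$ by Tonelli's theorem since all terms are positive, and then for complex $s$ with $\Re s>1$ by absolute convergence — replaces $\sum_{n\ge 0}e^{-(n+a)t}$ by the geometric series $e^{-at}/(1-e^{-t})$ and yields (\ref{zetahurint}).

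\emph{Meromorphic continuation.} Next I would split $\int_0^\infty=\int_0^1+\int_1^\infty$. The tail $\int_1^\infty t^{s-1}e^{-at}/(1-e^{-t})\,dt$ defines an entire function of $s$, the integrand decaying exponentially. For the part over $(0,1)$ I would use the generating function (\ref{bernoullipol}) in the form $\frac{e^{-at}}{1-e^{-t}}=\frac1t\cdot\frac{t\,e^{(1-a)t}}{e^t-1}=\sum_{n=0}^\infty\frac{B_n(1-a)}{n!}\,t^{n-1}$, valid for $0<t<1$, and integrate term by term (legitimate by uniform convergence on compact subsets of $(0,1)$) to get $\int_0^1 t^{s-1}\frac{e^{-at}}{1-e^{-t}}\,dt=\sum_{n=0}^\infty\frac{B_n(1-a)}{n!\,(s+n-1)}$, which is meromorphic in $s$ with at most simple poles at $s=1-n$, $n\ge 0$. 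Hence $\zeta_H(s,a)$ extends meromorphically to $\C$; the poles at $s=1-n$ with $n\ge 1$ are cancelled by the zeros of $1/\Gamma(s)$, leaving only the pole at $s=1$, in agreement with the remark already made that $s=1$ is the unique pole of $\zeta_H$.

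\emph{Values at $s=-k$.} Fix $k\in\N_0$. In the bracket the only pole at $s=-k$ comes from the $n=k+1$ term, with residue $B_{k+1}(1-a)/(k+1)!$, while $\Gamma(s)$ has a simple pole there with residue $(-1)^k/k!$, so $1/\Gamma(s)=(-1)^k k!\,(s+k)+O\big((s+k)^2\big)$ near $s=-k$. Multiplying gives $\zeta_H(-k,a)=(-1)^k k!\cdot\frac{B_{k+1}(1-a)}{(k+1)!}=\frac{(-1)^k}{k+1}B_{k+1}(1-a)$, and the reflection formula $B_{k+1}(1-a)=(-1)^{k+1}B_{k+1}(a)$ of Lemma \ref{lem2.2}(4) turns this into $\zeta_H(-k,a)=-B_{k+1}(a)/(k+1)$.

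\emph{Main obstacle.} No single step is deep; the real work is the bookkeeping of the continuation — pairing the $n=k+1$ pole of the Bernoulli series against the zero of $1/\Gamma(s)$ and keeping the signs straight through Lemma \ref{lem2.2}(4) — together with a careful justification of the two term-by-term interchanges (the first by absolute convergence for $\Re s>1$, the second by uniform convergence of the Bernoulli expansion on compacta of $(0,1)$). A Hankel-type contour representation of $\zeta_H$ would deliver the special values somewhat more directly, but the split-Mellin route above is the most elementary and reuses only what has already been set up.
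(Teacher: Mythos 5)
Your proof is correct and follows essentially the same route as the paper: rescale the Euler integral to obtain the Mellin representation, split at $t=1$, expand the small-$t$ integrand via the Bernoulli generating function, integrate term by term, and pair the simple pole of the $n=k+1$ term against the zero of $1/\Gamma(s)$ at $s=-k$. The only cosmetic difference is that you expand in $B_n(1-a)\,t^n$ and invoke the reflection $B_{k+1}(1-a)=(-1)^{k+1}B_{k+1}(a)$ at the end, whereas the paper writes the same series as $B_n(a)(-t)^n$ so the sign is carried from the outset; by Lemma \ref{lem2.2}(4) these are the identical expansion.
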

\begin{proof}
We use the definition of the Gamma-function and have
\beq \Gamma (s) = \intl _0^\infty u^{s-1} e^{-u} du = \lambda^s \intl_0^\infty t^{s-1} e^{-t \lambda} dt .\label{Gamma}\eeq
This shows the first part of the Theorem,
\beq \zeta_H (s,a) &=& \sumnn \frac 1 {\Gamma (s)} \intl_0^\infty t^{s-1} e^{-t (n+a)}  dt
= \frac 1 {\Gamma (s)} \intl_0^\infty t^{s-1} \sumnn e^{-t (n+a)}  dt\nn\\
&=&\frac 1 {\Gamma (s)} \intl_0^\infty t^{s-1} \frac{e^{-at}}{1-e^{-t}}dt.\nn\eeq
Furthermore we have
\beq \zeta_H (s,a) &=&
= \frac 1 {\Gamma (s)} \intl _0^\infty t^{s-2} \frac{ t e^{-ta}}{1-e^{-t}} dt \nn\\
&=& \frac 1 {\Gamma (s)} \intl_0^1 t^{s-2} \frac{ (-t) e^{-ta}} {e^{-t} -1} dt +
\frac 1 {\Gamma (s)} \intl_1^\infty t^{s-2} \frac{ (-t) e^{-ta}} {e^{-t} -1} dt .\nn\eeq
The integral in the second term is an entire function of $s$. Given the Gamma-function has singularities at $s=-k$, $k\in \N_0$, only the first term can possibly contribute to the properties $\zeta_H (-k,a)$ considered. We continue and write
\beq \frac 1 {\Gamma (s) } \intl_0^1 t^{s-2} \frac{ (-t) e^{-ta}}{e^{-t} -1} dt &=& \frac 1 {\Gamma (s)} \intl_0^1 t^{s-2} \sumnn \frac{ B_n (a)} {n!}(-t)^n dt \nn\\
&=& \frac 1 {\Gamma (s) } \sumnn \frac{B_n (a)} {n!} \frac {(-1)^n}{s+n-1} ,\nn\eeq
which provides the analytical continuation of the integral to the complex plane.
From here we observe again
\beq \res \zeta_H (1,a) &=& B_0 (a) =1 , \nn\eeq
and the second part of the Theorem
\beq
\zeta_H (-k , a) &=& \lim_{\ep\to 0} \frac 1 {\Gamma (-k+\ep )} \frac{B_{k+1} (a) } { (k+1)! } \frac{ (-1)^{k+1}} \ep \nn\\
&=& \lim_{\ep\to 0} (-1)^k k! \ep \frac{B_{k+1} (a) } { (k+1)! } \frac{ (-1)^{k+1}} \ep = - \frac{ B_{k+1} (a)} {k+1} \nn\eeq
follows.
\end{proof}
The disadvantage of the representation (\ref{zetahurint}) is that it is valid only for $\Re s >1$. This can be improved by using a complex contour integral representation. Starting point is the following representation for the Gamma-function \cite{grad65b}.
\begin{lemma}\label{lem2.4}
For $z\notin\Z$ we have
\beq \Gamma (z) = - \frac 1 {2i\sin (\pi z)} \intl _{\Cc} (-t)^{z-1} e^{-t} dt ,\nn\eeq
where the anticlockwise contour $\Cc$ consists of a circle $\Cc_3$ of radius $\epsilon < 2\pi$ and straight lines $\Cc_1$, respectively $\Cc_2$, just above, respectively just below, the $x$-axis; see Figure \ref{fig1}.
\end{lemma}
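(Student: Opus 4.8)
The plan is to establish the formula first for $\RE z>0$ by evaluating the integral over the three pieces of $\Cc$ explicitly, and then to pass to all $z\notin\Z$ by analytic continuation. First I would fix the branch of $(-t)^{z-1}=e^{(z-1)\log(-t)}$ by cutting the plane along the positive real axis, with $\arg(-t)=-\pi$ on the upper edge $\Cc_1$ (traversed from $+\infty$ inward toward the circle) and $\arg(-t)=+\pi$ on the lower edge $\Cc_2$ (traversed outward from the circle to $+\infty$), the circle $\Cc_3$ of radius $\epsilon$ being traversed so that $\arg(-t)$ runs from $-\pi$ to $+\pi$; this is the orientation indicated in Figure \ref{fig1}. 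On $\Cc_1$ one then has $t=r$, $r\in(\epsilon,\infty)$, and $(-t)^{z-1}=r^{z-1}e^{-i\pi(z-1)}$, while on $\Cc_2$ one has $(-t)^{z-1}=r^{z-1}e^{i\pi(z-1)}$, so that, with $dt=dr$ and the orientations taken into account,
\[
\left(\intl_{\Cc_1}+\intl_{\Cc_2}\right)(-t)^{z-1}e^{-t}\,dt
=\left(e^{i\pi(z-1)}-e^{-i\pi(z-1)}\right)\intl_\epsilon^\infty r^{z-1}e^{-r}\,dr
=-2i\sin(\pi z)\intl_\epsilon^\infty r^{z-1}e^{-r}\,dr,
\]
where I used $\sin(\pi(z-1))=-\sin(\pi z)$.

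Next I would check that the contribution of $\Cc_3$ is negligible: writing $t=\epsilon e^{i\varphi}$ one has $|(-t)^{z-1}|\le\epsilon^{\RE z-1}e^{\pi|\IM z|}$ and $|e^{-t}|\le e^\epsilon$, while $\Cc_3$ has length $2\pi\epsilon$, so $\left|\intl_{\Cc_3}(-t)^{z-1}e^{-t}\,dt\right|\le C\,\epsilon^{\RE z}\to0$ as $\epsilon\to0$, precisely because $\RE z>0$. Combining the three pieces, letting $\epsilon\to0$, and recalling $\Gamma(z)=\intl_0^\infty r^{z-1}e^{-r}\,dr$ from (\ref{Gamma}) (which also uses $\RE z>0$), gives
\[
\intl_{\Cc}(-t)^{z-1}e^{-t}\,dt=-2i\sin(\pi z)\,\Gamma(z),
\]
which is the asserted identity for $\RE z>0$ (here $\sin(\pi z)\neq0$ since $z$ is not an integer).

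Finally I would remove the restriction $\RE z>0$ by analytic continuation. The left-hand side is independent of $\epsilon$, since the integrand is holomorphic in $t$ off the cut so that $\Cc$ may be deformed freely, and it defines an entire function of $z$: the integrand is entire in $z$ for each fixed $t\in\Cc$, and the bound $|(-t)^{z-1}e^{-t}|\le C\,|t|^{\RE z-1}e^{-|t|}$ along the rays gives local uniform convergence of the $t$-integral. On the other hand $-1/(2i\sin(\pi z))$ is meromorphic on $\C$ with poles only at the integers, and $\Gamma$ is holomorphic on the connected open set $\C\setminus\Z$; since the two sides of the claimed identity agree on the nonempty open subset $\{\RE z>0\}\setminus\Z$, the identity theorem forces them to agree throughout $\C\setminus\Z$. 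The only delicate point is the bookkeeping in the first step: keeping the chosen branch of $(-t)^{z-1}$ consistent with the orientation of $\Cc_1$ and $\Cc_2$ in Figure \ref{fig1}, so that the two edges really combine to $-2i\sin(\pi z)$ times the Euler integral rather than a spurious sign or an unsimplified $\sin(\pi(z-1))$; the rest is routine.
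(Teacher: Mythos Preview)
Your proof is correct and follows essentially the same route as the paper: split the contour into the two edges and the small circle, show the circle contribution vanishes in the limit, combine the edges to produce $-2i\sin(\pi z)$ times the Euler integral, and then invoke analytic continuation. The only cosmetic difference is that the paper starts from $\RE z>1$ and uses mere boundedness of the integrand on $\Cc_3$, whereas you start from $\RE z>0$ and use the sharper bound $C\,\epsilon^{\RE z}$ on the circle integral; both lead to the same conclusion.
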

\begin{figure}[ht]
\setlength{\unitlength}{1cm}

\begin{center}
\begin{picture}(10,6.5)
\thicklines
\put(0,3){\vector(1,0){10}} \put(5.0,0){\vector(0,1){6}}
\put(8.0,5.5){{\bf $t$-plane}}
\qbezier(5.3,3.01)(8.0,3.01)(9.50,3.01) \qbezier(5.3,2.99)(8.0,2.99)(9.50,2.99)
\put(5.0,3.0){\circle{0.6}}
\put(5.0,3.0){\circle{0.59}}
\put(5.0,3.0){\circle{0.61}}
\put(5.0,3.0){\circle{0.58}}
\put(5.0,3.0){\circle{0.62}}
\put(9.30,3.2){$\Cc_1$}
\put(9.30,2.6){$\Cc_2$}
\put(4.3,3.2){$\Cc_3$}
\put(8.9,3.735){\vector(-1,0){1.7}}
\put(7.2,2.265){\vector(1,0){1.6}}
\put(7.2,3.9){$-t = e^{-i \pi} u$}
\put(7.2,1.85){$-t = e^{i \pi} u$}
\end{picture}
\caption{Contour $\Cc$ in Lemma \ref{lem2.4}.}\label{fig1}
\end{center}
\end{figure}
\begin{proof}
Assume $\Re z >1$. As the integrand remains bounded along ${\Cc}_3$, no contributions will result as $\ep \to 0$. Along ${\Cc}_1$ and ${\Cc}_2$ we parameterize as given in Figure \ref{fig1} and thus for $\Re z>1$
\beq \lim_{\epsilon\to 0}\,\,\intl_{\Cc} (-t)^{z-1} e^{-t} dt &=& \intl_\infty^0 e^{-i \pi (z-1)} u^{z-1} e^{-u} du
+ \intl_0^\infty e^{i\pi (z-1)} u^{z-1} e^{-u} du \nn\\
&=& - \intl_0^\infty u^{z-1} e^{-u} \left( e^{i\pi z} - e^{-i\pi z} \right) du\nn\\
&=& - 2i\sin (\pi z) \intl_0^\infty u^{z-1} e^{-u} du ,\nn\eeq
which implies the assertion by analytical continuation.
\end{proof}
This representation for the Gamma-function can be used to show the following result for the Hurwitz zeta function.
\begin{theorem}
For $ s \in \C $, $s\notin \N$, we have \beq \zeta_H (s,a) = - \frac{\Gamma (1-s)}{2\pi i} \intl_{\Cc} \frac{(-t)^{s-1} e^{-ta}}{1-e^{-t}} dt,\nn\eeq
with the contour $\Cc$ given in Figure \ref{fig1}.
\end{theorem}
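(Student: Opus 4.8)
The plan is to begin with the real integral representation~(\ref{zetahurint}), which is valid for $\Re s>1$, and to fold it onto the Hankel contour $\Cc$ in precisely the manner used to prove Lemma~\ref{lem2.4}. So fix $s$ with $\Re s>1$ and set $I(s):=\intl_{\Cc}(-t)^{s-1}e^{-ta}(1-e^{-t})^{-1}\,dt$, splitting the contour as $\Cc=\Cc_1\cup\Cc_3\cup\Cc_2$. Near $t=0$ one has $(1-e^{-t})^{-1}=O(1/|t|)$, so the integrand is $O(|t|^{\Re s-2})$ and the contribution of the circle $\Cc_3$ of radius $\epsilon$ is $O(\epsilon^{\Re s-1})$; this vanishes as $\epsilon\to0$ exactly because $\Re s>1$, just as the $\Cc_3$ term did in Lemma~\ref{lem2.4}.

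On the two horizontal lips I parameterize as in Figure~\ref{fig1}, $-t=e^{-i\pi}u$ on $\Cc_1$ (traversed from $u=\infty$ down to $u=0$) and $-t=e^{i\pi}u$ on $\Cc_2$ (from $u=0$ out to $u=\infty$); note $e^{-ta}=e^{-ua}$ and $1-e^{-t}=1-e^{-u}$ on both, so only the branch of $(-t)^{s-1}$ differs between the lips. Collecting the resulting phases,
\beq
\lim_{\epsilon\to0}I(s)&=&\bigl(e^{i\pi(s-1)}-e^{-i\pi(s-1)}\bigr)\intl_0^\infty\frac{u^{s-1}e^{-ua}}{1-e^{-u}}\,du\nn\\
&=&-2i\sin(\pi s)\,\Gamma(s)\,\zeta_H(s,a),\nn
\eeq
where the second line uses $\sin(\pi(s-1))=-\sin(\pi s)$ and the representation~(\ref{zetahurint}). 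Hence $\zeta_H(s,a)=-\bigl(2i\sin(\pi s)\,\Gamma(s)\bigr)^{-1}I(s)$ for $\Re s>1$, and Euler's reflection formula $\Gamma(s)\Gamma(1-s)=\pi/\sin(\pi s)$ rewrites the prefactor as $-\Gamma(1-s)/(2\pi i)$, establishing the claimed identity on the half-plane $\Re s>1$.

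It remains to drop the restriction $\Re s>1$. With the contour $\Cc$ now held fixed, so that $t=0$ and all the poles $t\in2\pi i\Z$ of $(1-e^{-t})^{-1}$ are avoided, the integral $I(s)$ is an entire function of $s$: for each $t\in\Cc$ the integrand is holomorphic in $s$, on the tails $\Cc_1,\Cc_2$ the factor $e^{-ta}$ with $a>0$ gives absolute convergence locally uniformly in $s$, and Morera's theorem (or differentiation under the integral sign) then applies. Since $\Gamma(1-s)$ is meromorphic on $\C$ with poles exactly at $s\in\N$, the right-hand side is meromorphic on $\C$ with poles confined to $\N$; as it coincides with $\zeta_H(s,a)$ for $\Re s>1$, the identity theorem propagates the equality to all of $\C\setminus\N$, which is the assertion. (At $s=2,3,\dots$ the zero of $I(s)$ then has to cancel the pole of $\Gamma(1-s)$, in agreement with $\zeta_H(\,\cdot\,,a)$ being regular there, but this observation is not needed here.)

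The one genuinely delicate point, exactly as in Lemma~\ref{lem2.4}, is the bookkeeping of the branch of $(-t)^{s-1}$ along the two edges of the cut together with the $O(\epsilon^{\Re s-1})$ estimate on $\Cc_3$ that pins the deformation step to $\Re s>1$; the subsequent identification of $I(s)$ as an entire function and the analytic-continuation step are routine.
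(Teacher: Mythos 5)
Your proof is correct and takes essentially the same route as the paper: it folds the real-axis representation~(\ref{zetahurint}) onto the Hankel contour by the identical phase bookkeeping used in Lemma~\ref{lem2.4}, then applies the reflection formula $\sin(\pi s)\Gamma(s)=\pi/\Gamma(1-s)$. The only difference is that you make explicit the analytic-continuation step (the $O(\epsilon^{\Re s-1})$ estimate on $\Cc_3$, the entireness of the contour integral for fixed $\Cc$, and the identity theorem), which the paper handles silently with the phrase ``we follow the previous calculation.''
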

\begin{proof}
We follow the previous calculation to note
\beq \intl_{\Cc} \frac{(-t)^{s-1} e^{-ta}}{1-e^{-t}} dt = - 2i\sin (\pi s ) \intl_0^\infty t^{s-1} \frac{e^{-ta}}{1-e^{-t}} dt,\nn\eeq
and we use \cite{grad65b} $$\sin (\pi s) \Gamma (s) = \frac \pi {\Gamma (1-s)}$$
to conclude the assertion.
\end{proof}
From here, properties previously given can be easily derived. For $s\in \Z$ the integrand does not have a branch cut and the integral can easily be evaluated using the residue theorem. The only possible singularity enclosed is at $t=0$ and to read off the residue we use the expansion
$$- (-t)^{s-2} \frac{(-t) e^{-ta}}{e^{-t}-1} = - (-t)^{s-2} \sumnn \frac{B_n (a)} {n!} (-t)^n .$$
\subsection{Barnes zeta function}\label{secbarn}
The Barnes zeta function is a multidimensional generalization of the Hurwitz zeta function.
\begin{definition}\label{defbarn}
Let $s\in \C$ with $\Re s >d$ and $c\in \R_+$, $\vec r \in \R_+^d$. The Barnes zeta function is defined as \beq \zeta_\Bb (s,c|\vec r) = \sum_{\vec m \in \N_0^d} \frac 1 {(c+\vec m \cdot \vec r)^s}.\label{eqbarn}\eeq If $c=0$ it is understood that the summation ranges over $\vec m \neq \vec 0$.
\end{definition}
For $\vec r = \vec 1_d:=(1,1,...,1,1)$, the Barnes zeta function can be expanded in terms of the Hurwitz zeta function.
\begin{example}\label{ex1}
Let us consider $d=2$ and $\vec r = (1,1)$. Then
\beq\zeta_\Bb (s,c|\vec 1_2) &=& \sum_{\vec m \in \N_0^2} \frac 1 {(c+m_1+m_2)^s}= \sumkn \frac {k+1} {(c+k)^s} = \sumkn \frac{k+c+1-c}{(c+k)^s} \nn\\
&=& \zeta_H (s-1,c) + (1-c) \zeta_H (s,c).\nn\eeq
\end{example}
\begin{example}\label{ex2}
Let $e_k^{(d)}$ be the number of possibilities to write an integer $k$ as a sum over $d$ non-negative integers. We then can write
\beq     \zeta_\Bb (s,c| \vec 1_d) = \sum_{\vec m\in \N_0^d} \frac 1 {(c+m_1+...+m_d)^s} = \sumkn e_k^{(d)} \frac 1 {(c+k)^s} .\nn\eeq
The coefficient $e_k^{(d)}$ can be determined for example as follows. Consider \beq \frac 1 {(1-x)^d} &=& \frac 1 {1-x} \cdot \cdot \cdot \frac 1 {1-x} = \left( \sum_{l_1=0}^\infty x^{l_1} \right) \cdot \cdot \cdot \left( \sum_{l_d=0}^\infty x^{l_d}\right) \nn\\
&=& \sum_{l_1=0}^\infty \cdot \cdot \cdot \sum_{l_d=0}^\infty x^{l_1+...+l_d} = \sumkn e_k^{(d)} x^k.\nn\eeq
On the other side, using the binomial expansion \beq \frac 1 {(1-x)^d} &=& \sumkn \frac{\Gamma (d+k)} {\Gamma (d) k!} x^k = \sumkn \frac{(d+k-1)!}{(d-1)! k!} x^k \nn\\
&=& \sumkn {d+k-1 \choose d-1 } x^k.\nn\eeq
This shows \beq \zeta_{\mathcal B} (s,c| \vec 1_d) = \sumkn {d+k-1 \choose d-1} \frac 1 {(c+k)^s} ,\nn\eeq
which, once the dimension $d$ is specified, allows to write the Barnes zeta function as a sum of Hurwitz zeta functions along the lines in Example \ref{ex1}.
\end{example}
It is possible to obtain similar formulas for $r_i$ rational numbers \cite{dowk94-162-633,dowk94-35-4989}.

For some properties of the Barnes zeta function the use of complex contour integral representations turns out to be the best strategy.
\begin{theorem}\label{barnrep}
We have the following representations:
\beq \zeta_\Bb (s,c| \vec r) &=& \frac 1 {\Gamma (s)} \intl_0^\infty t^{s-1} \frac{e^{-ct}}{\prod_{j=1}^d \left(1-e^{-r_j t}\right)} dt \nn\\
&=& - \frac {\Gamma (1-s)} {2\pi i} \intl_\Cc (-t)^{s-1} \frac{e^{-ct}}{\prod_{j=1}^d \left(1-e^{-r_j t}\right)} dt,\nn\eeq
with the contour $\Cc$ given in Figure \ref{fig1}.
\end{theorem}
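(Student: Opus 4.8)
The plan is to imitate the proofs of Theorem \ref{the2.3} and Lemma \ref{lem2.4}, the only genuinely new ingredient being the factorization of a $d$-fold geometric series. For the first representation I would start from Definition \ref{defbarn} together with the rescaled form of the Gamma-function in (\ref{Gamma}), namely $\la^{-s}=\Gamma(s)^{-1}\intl_0^\infty t^{s-1}e^{-t\la}\,dt$ applied with $\la=c+\vec m\cdot\vec r>0$, to write
\[
\zeta_\Bb(s,c|\vec r)=\frac{1}{\Gamma(s)}\sum_{\vec m\in\N_0^d}\intl_0^\infty t^{s-1}e^{-t(c+\vec m\cdot\vec r)}\,dt .
\]
For $\Re s>d$ the double sum/integral converges absolutely, so Fubini's theorem permits interchanging $\sum$ and $\intl$, and then the sum over $\vec m$ splits into $d$ independent geometric series,
\[
\sum_{\vec m\in\N_0^d}e^{-t\,\vec m\cdot\vec r}=\prod_{j=1}^d\sum_{m_j=0}^\infty\left(e^{-r_jt}\right)^{m_j}=\prod_{j=1}^d\frac{1}{1-e^{-r_jt}},
\]
valid for $t>0$ since each $r_j>0$; this yields the first identity. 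One should note that the resulting integrand behaves like $t^{s-1-d}/\prod_j r_j$ as $t\to0^+$, so integrability at the origin again forces $\Re s>d$, while the factor $e^{-ct}$ with $c>0$ takes care of $t\to\infty$.

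For the contour representation I would argue exactly as in the proof of Lemma \ref{lem2.4}. Collapsing $\Cc$ onto the rays $\Cc_1,\Cc_2$ and the circle $\Cc_3$ of radius $\ep$, and using the parameterization $-t=e^{\mp i\pi}u$ of Figure \ref{fig1}, one gets for $\Re s>d$
\[
\intl_\Cc(-t)^{s-1}\frac{e^{-ct}}{\prod_{j=1}^d(1-e^{-r_jt})}\,dt=-2i\sin(\pi s)\intl_0^\infty t^{s-1}\frac{e^{-ct}}{\prod_{j=1}^d(1-e^{-r_jt})}\,dt ,
\]
the contribution of $\Cc_3$ vanishing as $\ep\to0$ because there the integrand is $O(\ep^{\Re s-1-d})$ while the arc length is $O(\ep)$, so that piece is $O(\ep^{\Re s-d})$. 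Combining this with the first representation and the reflection formula $\sin(\pi s)\Gamma(s)=\pi/\Gamma(1-s)$ then gives the asserted identity for $\Re s>d$.

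Finally I would pass to general $s$ by analytic continuation: the path $\Cc$ stays a fixed distance $\ep$ from $t=0$, the integrand is entire in $s$, and it decays exponentially along $\Cc_1,\Cc_2$ thanks to $e^{-ct}$, so the contour integral defines an entire function of $s$; since $\Gamma(1-s)$ is analytic except for simple poles at $s\in\N$, the right-hand side is meromorphic on $\C$ with possible poles only at $s\in\N$, and it coincides with $\zeta_\Bb(s,c|\vec r)$ for $\Re s>d$. I do not anticipate a real obstacle here; the single point that needs care is keeping track of the constraint $\Re s>d$, which is needed both for absolute convergence of the defining series and --- after the geometric summation --- for integrability at $t=0$ and for making the $\Cc_3$-contribution vanish before the continuation is carried out.
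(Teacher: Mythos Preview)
Your proposal is correct and follows precisely the route the paper indicates: the paper does not spell out a proof of Theorem~\ref{barnrep} but leaves it as an exercise pointing to equation~(\ref{Gamma}) for the first identity and Lemma~\ref{lem2.4} for the second, which is exactly what you do. Your treatment is in fact more careful than the paper's sketch, since you track the constraint $\Re s>d$ explicitly at each stage and observe that the $\Cc_3$-contribution is $O(\ep^{\Re s-d})$ before invoking analytic continuation.
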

\begin{exercise}
Use equation (\ref{Gamma}), respectively Lemma \ref{lem2.4}, to proof Theorem \ref{barnrep}.
\end{exercise}
The residues of the Barnes zeta function and its values at non-positive integers are best described using generalized Bernoulli polynomials \cite{norl22-43-121}.
\begin{definition}\label{defber}
We define the generalized Bernoulli polynomials $B_n^{(d)} (x|\vec r)$ by the equation \beq \frac{e^{-xt}}{\prod_{j=1}^d \left( 1-e^{-r_j t}\right)} = \frac{ (-1)^d}{\prod_{j=1}^d r_j} \sumnn \frac{(-t)^{n-d} }{n!} B_n^{(d)} (x|\vec r).\nn\eeq
\end{definition}
Using Definition \ref{defber}  in Theorem \ref{barnrep} one immediately obtains the following properties of the Barnes zeta function.
\begin{theorem}\label{polebarn}
We have \begin{enumerate}
\item
\beq \res \zeta_\Bb (z,c|\vec r) = \frac{ (-1)^{d+z}}{(z-1)! (d-z)! \prod_{j=1}^d r_j} B_{d-z}^{(d)} (c|\vec r ), \quad \quad z=1,2,...,d, \nn\eeq
\item
\beq \zeta_\Bb (-n , c| \vec r ) = \frac{ (-1)^d n!} {(d+n)! \prod_{j=1}^d r_j } B_{d+n} ^{(d)} (c| \vec r ) .\nn\eeq
\end{enumerate}
\end{theorem}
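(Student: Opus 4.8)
The plan is to read off both assertions from the complex contour representation
\beq \zeta_\Bb (s,c|\vec r) = - \frac{\Gamma (1-s)}{2\pi i} \intl_\Cc (-t)^{s-1} \frac{e^{-ct}}{\prod_{j=1}^d \left(1-e^{-r_j t}\right)}\, dt \nn\eeq
of Theorem \ref{barnrep}, treating the two factors on the right-hand side separately. Provided the radius $\epsilon$ of $\Cc_3$ is taken smaller than $2\pi/\max_j r_j$, so that $\Cc$ avoids all the poles $t=2\pi i k/r_j$, $k\neq 0$, of the integrand, the contour integral
\beq G(s) := \intl_\Cc (-t)^{s-1} \frac{e^{-ct}}{\prod_{j=1}^d \left(1-e^{-r_j t}\right)}\, dt \nn\eeq
converges for every $s\in\C$ — the factor $e^{-ct}$ produces exponential decay as $\RE t\to+\infty$ along $\Cc_1$ and $\Cc_2$ — and defines an entire function of $s$, since $(-t)^{s-1}$ is entire in $s$ for $t$ bounded away from the origin. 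Consequently the only possible poles of $\zeta_\Bb(s,c|\vec r)$ are those inherited from $\Gamma(1-s)$, which are simple and located at $s=1,2,3,\dots$ with $\res_{s=z}\Gamma(1-s)=(-1)^z/(z-1)!$.

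Next I would evaluate $G$ at integer arguments. When $s\in\Z$ the integrand is single-valued in a neighbourhood of the positive real axis, so the contributions of $\Cc_1$ and $\Cc_2$ cancel and $G(s)$ reduces to the integral over the small circle $\Cc_3$; by the residue theorem this equals $2\pi i$ times the residue of the integrand at $t=0$, the only enclosed singularity. To compute that residue I would insert the defining expansion of the generalized Bernoulli polynomials from Definition \ref{defber},
\beq (-t)^{s-1}\,\frac{e^{-ct}}{\prod_{j=1}^d \left(1-e^{-r_j t}\right)} = \frac{(-1)^d}{\prod_{j=1}^d r_j}\sum_{n=0}^\infty \frac{B_n^{(d)}(c|\vec r)}{n!}\,(-t)^{s-1+n-d}, \nn\eeq
which is a convergent Laurent expansion near $t=0$, and extract the coefficient of $t^{-1}$. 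For $s=z$ the relevant term is $n=d-z$; when $z>d$ this index is negative, so the residue vanishes, confirming that $\zeta_\Bb(s,c|\vec r)$ is regular there and that its poles sit only at $z=1,\dots,d$. For $z\in\{1,\dots,d\}$ one obtains $G(z)=-2\pi i\,(-1)^d B_{d-z}^{(d)}(c|\vec r)/\big((d-z)!\prod_j r_j\big)$, and combining this with $\res_{s=z}\Gamma(1-s)$ in the identity $\zeta_\Bb(s,c|\vec r)=-\Gamma(1-s)G(s)/(2\pi i)$ produces assertion~(1).

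For assertion~(2) I would instead set $s=-n$ with $n\in\N_0$. Now $\Gamma(1-s)=\Gamma(1+n)=n!$ is finite, and $\zeta_\Bb(-n,c|\vec r)=-n!\,G(-n)/(2\pi i)$, where, exactly as before, $G(-n)$ equals $2\pi i$ times the residue at $t=0$ of $(-t)^{-n-1}e^{-ct}/\prod_j(1-e^{-r_j t})$. In the Bernoulli expansion the coefficient of $t^{-1}$ is now picked out by the index $n+d$, which gives $G(-n)=-2\pi i\,(-1)^d B_{n+d}^{(d)}(c|\vec r)/\big((n+d)!\prod_j r_j\big)$ and hence the stated value. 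The computation is essentially mechanical once the contour representation is in hand; the one point requiring care is the bookkeeping of the signs $(-1)^{s-1+n-d}$ arising from the branch $(-t)^{s-1}$ together with the orientation of $\Cc_3$ when the contour is collapsed. The passage from the half-plane $\RE s>d$ to all of $\C$ needs no separate argument, since $G$ is entire and the identity of Theorem \ref{barnrep} holds throughout $\C\setminus\N$.
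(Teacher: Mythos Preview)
Your proof is correct and follows precisely the second of the two routes the paper indicates (as exercises) immediately after the theorem: insert Definition~\ref{defber} into the contour representation of Theorem~\ref{barnrep} and evaluate the residue at $t=0$. The paper also points to an alternative approach via the real Mellin integral, splitting $\int_0^\infty=\int_0^1+\int_1^\infty$ and expanding the first piece term by term in analogy with the proof of Theorem~\ref{the2.3}; your contour argument is the cleaner of the two, since it handles the analytic continuation, the pole locations, and the special values in one stroke without separately analysing $1/\Gamma(s)$ near non-positive integers.
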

\begin{exercise}
Use the first representation of $\zeta_\Bb (s,c|\vec r)$ in Theorem \ref{barnrep} together with Definition \ref{defber} to show Theorem \ref{polebarn}. Follow the steps of the proof in Theorem \ref{the2.3}.
\end{exercise}
\begin{exercise}
Use the second representation of $\zeta_\Bb (s,c| \vec r)$ in Theorem \ref{barnrep} together with Definition \ref{defber} and the residue theorem to show Theorem \ref{polebarn}.
\end{exercise}
\subsection{Epstein zeta function}
We now consider zeta functions associated with sums of squares of integers \cite{epst03-56-615,epst07-63-205}.
\begin{definition}\label{epdef}
Let $s\in \C$ with $\Re s > d/2$ and $c\in \R_+$, $\vec r \in \R_+^d.$ The Epstein zeta function is defined as
\beq \zeta_\Ee (s,c| \vec r ) = \sum_{\vec m \in \Z ^d} \frac 1 {(c+r_1 m_1^2 + r_2 m_2^2+...+r_d m_d^2)^s}.\nn\eeq
If $c=0$ it is understood that the summation ranges over $\vec m \neq \vec 0$.
\end{definition}
\begin{lemma}
\label{mellinep}
For $\Re s > d/2$, we have \beq \zeta _\Ee (s,c| \vec r) = \frac 1 {\Gamma (s)} \intl_0^\infty t^{s-1} \sum_{\vec m \in \Z ^d} e^{-t (r_1 m_1^2 + ... + r_d m_d^2 + c)} dt .\nn\eeq
\end{lemma}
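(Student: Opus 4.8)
The plan is to insert the integral representation (\ref{Gamma}) of the Gamma function into each summand of the Epstein zeta function and then interchange summation with integration. Concretely, write $\lambda_{\vec m} = c + r_1 m_1^2 + \cdots + r_d m_d^2$, which is strictly positive for every $\vec m \in \Z^d$ since $c\in\R_+$ (with the term $\vec m = \vec 0$ omitted from both sides when $c=0$, in accordance with the convention of Definition \ref{epdef}). Then (\ref{Gamma}) gives, for $\Re s > 0$,
$$\frac{1}{\lambda_{\vec m}^{\,s}} = \frac{1}{\Gamma(s)} \intl_0^\infty t^{s-1} e^{-t\lambda_{\vec m}}\, dt ,$$
and summing over $\vec m \in \Z^d$ and pulling the sum inside the integral yields precisely the claimed formula. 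Thus the entire content of the lemma is the justification of that interchange.

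For the interchange I would invoke Fubini's theorem on $\Z^d \times (0,\infty)$ equipped with counting measure times Lebesgue measure. The modulus of the integrand is $t^{\Re s - 1} e^{-t\lambda_{\vec m}}$, and
$$\sum_{\vec m \in \Z^d} \intl_0^\infty t^{\Re s - 1} e^{-t\lambda_{\vec m}}\, dt = \Gamma(\Re s) \sum_{\vec m \in \Z^d} \frac{1}{\lambda_{\vec m}^{\,\Re s}} ,$$
so the double integral is absolutely convergent exactly when the last sum is finite. That sum is the Epstein zeta function at the real point $\Re s$, and it converges for $\Re s > d/2$: comparing the summand with the integral $\int_{\R^d}(c + r_1 x_1^2 + \cdots + r_d x_d^2)^{-\Re s}\, dx$, which after rescaling $x_j \mapsto r_j^{-1/2} x_j$ and passing to polar coordinates behaves at infinity like $\int^\infty \rho^{d-1-2\Re s}\, d\rho$, shows convergence precisely in this range. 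This is the only genuine point requiring attention; everything else is bookkeeping.

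Once Fubini applies one simply reads off
$$\zeta_\Ee(s,c|\vec r) = \sum_{\vec m \in \Z^d} \frac{1}{\Gamma(s)} \intl_0^\infty t^{s-1} e^{-t\lambda_{\vec m}}\, dt = \frac{1}{\Gamma(s)} \intl_0^\infty t^{s-1} \sum_{\vec m \in \Z^d} e^{-t\lambda_{\vec m}}\, dt ,$$
which is the assertion. The main — and essentially only — obstacle is therefore confirming that the region of absolute convergence is $\Re s > d/2$; but this is the same estimate that already makes Definition \ref{epdef} meaningful, so no new ideas are needed beyond the comparison with the power-type integral over $\R^d$.
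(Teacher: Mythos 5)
Your proof is correct and follows the same route as the paper: insert the Gamma-function integral representation (\ref{Gamma}) term by term and interchange sum and integral. The paper's own proof is a one-line reference to this step ("This follows as before from property (\ref{Gamma})"), whereas you supply the Fubini/Tonelli justification and the convergence estimate explicitly — a welcome but not essentially different elaboration.
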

\begin{proof} This follows as before from property (\ref{Gamma}) of the Gamma-function.\end{proof}
As we have noted in the proof of Theorem \ref{the2.3}, it is the small-$t$ behavior of the integrand that determines residues of the zeta function and special function values. The way the integrand is written in Lemma \ref{mellinep} this $t\to 0$ behavior is not easily read off. A suitable representation is obtained by using the Poisson resummation \cite{hill62b}.
\begin{lemma}
\label{poisson}
Let $r \in \C$ with $\Re r >0$ and $t\in \R_+$, then
\beq \sum_{l=-\infty} ^\infty e^{-trl^2} = \sqrt{\frac \pi {tr} } \sumlm e^{-\frac{\pi ^2} {rt} l^2} .\nn\eeq
\end{lemma}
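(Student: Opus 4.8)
The plan is to deduce this identity as an instance of the classical Poisson summation formula applied to a Gaussian.

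First I would record Poisson summation in the normalization
\beq \sum_{n\in\Z} f(n) &=& \sum_{k\in\Z} \hat f(k), \qquad \hat f(\xi) = \intl_{-\infty}^\infty f(y)\, e^{-2\pi i \xi y}\, dy, \nn\eeq
valid for $f$ in the Schwartz class. To establish it I would form the periodization $F(x) = \sum_{n\in\Z} f(x+n)$, note that the rapid decay of $f$ makes this series and all its derivatives converge locally uniformly, so that $F$ is a smooth $1$-periodic function; then compute its Fourier coefficients $c_k = \intl_0^1 F(x)\, e^{-2\pi i k x}\, dx = \hat f(k)$ by unfolding the integral over the unit interval into an integral over $\R$; and finally evaluate the Fourier series $F(x) = \sum_{k\in\Z} c_k\, e^{2\pi i k x}$ at the point $x=0$.

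Second, I would apply this with $f(y) = e^{-t r y^2}$, keeping $t\in\R_+$ fixed and taking $r>0$ real to begin with. Completing the square, $t r y^2 + 2\pi i \xi y = t r \left( y + \frac{\pi i \xi}{t r}\right)^2 + \frac{\pi^2 \xi^2}{t r}$, and shifting the path of integration back onto the real axis (legitimate by Cauchy's theorem, since the Gaussian decays along horizontal lines), one gets $\hat f(\xi) = \sqrt{\pi/(t r)}\, e^{-\pi^2 \xi^2/(t r)}$. Substituting this into the summation formula with $\xi = k$ reproduces the asserted identity for real $r>0$.

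Third, to obtain the full statement for complex $r$ with $\Re r>0$ I would argue by analytic continuation in $r$, with $t$ fixed. The left-hand side $\sum_{l\in\Z} e^{-t r l^2}$ is holomorphic on the half-plane $\{\Re r>0\}$, being dominated on compact subsets by a convergent series of the form $\sum_{l} e^{-t\delta l^2}$ with $\delta>0$; the right-hand side is holomorphic there as well, since $\sqrt{\pi/(t r)}$ is holomorphic away from $r=0$ and the remaining series converges locally uniformly for the same reason. Since the two sides agree on the ray $(0,\infty)$, they agree on all of $\{\Re r>0\}$. The step I expect to demand the most care is the first one: making Poisson summation rigorous --- the locally uniform convergence of the periodized series, the legitimacy of term-by-term integration in identifying $c_k$ with $\hat f(k)$, and the pointwise convergence of the Fourier series of $F$ at $x=0$ (which follows from the smoothness of $F$). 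Once that is secured, the Gaussian Fourier-transform computation is a routine contour shift and the passage to complex $r$ is immediate.
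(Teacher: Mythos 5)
Your proposal follows the route the paper itself indicates via Exercise \ref{expoi} and the exercise immediately after Lemma \ref{poisson}: establish Poisson summation by periodizing $f$ and expanding the resulting $1$-periodic function in its Fourier series, then apply it to the Gaussian $f(y)=e^{-try^2}$ with the transform computed by completing the square and a contour shift. The only step you add beyond what those exercises explicitly request is the analytic continuation from real $r>0$ to the half-plane $\Re r>0$, which is indeed needed to cover the full hypothesis of the lemma and which you handle correctly.
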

\begin{exercise}\label{expoi}
If $F(x)$ is continuous such that
$$\int\limits_{-\infty}^\infty \,\, | F(x) | dx < \infty,$$ then we
define its Fourier transform by
$$\hat F (u) = \int\limits_{-\infty}^\infty \,\, F(x) e^{-2\pi i x
u} \,\, dx.$$ If $$\int\limits_{-\infty}^\infty \,\, | \hat F (u)
| \,\, du < \infty , $$ then we have the Fourier inversion formula
$$F(x) = \int\limits_{-\infty} ^\infty \,\, \hat F (u) \,\,
e^{2\pi i x u} \,\, du.$$ Show the following Theorem: Let $F\in
L^1 (\R)$. Suppose that the series $$\sum_{n\in\Z} \,\,
F(n+v)$$ converges absolutely and uniformly in $v$, and that
$$\sum_{m\in\Z} \,\, | \hat F (m) | < \infty.$$ Then
$$\sum_{n\in\Z} \,\, F(n+v) = \sum_{n\in\Z} \,\, \hat F
(n) e^{2\pi i nv}.$$ Hint: Note that $$G(v) =
\sum_{n\in\Z}\,\, F(n+v)$$ is a function of $v$ of period
$1$. \end{exercise}
\begin{exercise}
Apply Exercise \ref{expoi} with a suitable function $F(x)$ to show the Poisson resummation formula Lemma \ref{poisson}.
\end{exercise}
In Lemma \ref{poisson} it is clearly seen that the only term on the right hand side that is not exponentially damped as $t\to 0$ comes from the $l=0$ term. Using the resummation formula for all $d$ sums
in Lemma \ref{mellinep}, after resumming the $\vec m = \vec 0$ term contributes
\beq \zeta_\Ee ^{\vec 0} (s,c| \vec r ) &=&
\frac 1 {\Gamma (s)} \intl_0^\infty t^{s-1} \frac{ \pi^{d/2}}{t^{d/2} \sqrt{ r_1 \cdot \cdot \cdot r_d}} e^{-ct} dt \nn\\
&=&\frac{\pi^{d/2}}{\sqrt{r_1 \cdot \cdot \cdot r_d } \,\,\Gamma (s)} \intl_0^\infty t^{s-d/2-1} e^{-ct} dt \nn\\
&=& \frac{\pi ^{d/2}}{\sqrt{r_1 \cdot \cdot \cdot r_d} } \frac{\Gamma \left( s-\frac d 2 \right)}{\Gamma (s) c^{s-d/2}} .\nn\eeq
All other contributions after resummation are exponentially damped as $t\to 0$ and can be given in terms of modified Bessel functions \cite{grad65b}.
\begin{definition}
\label{bessel}
Let $\Re z^2 >0$, then we define the modified Bessel function $K_\nu (z)$ as
\beq K_\nu (z) = \frac 1 2 \left( \frac z 2 \right)^\nu \intl_0^\infty e^{-t - \frac{z^2} {4t}}\,\, t^{-\nu -1} dt.\nn\eeq
\end{definition}
Performing the resummation in Lemma \ref{mellinep} according to Lemma \ref{poisson}, with Definition \ref{bessel} one obtains the following representation of the Epstein zeta function valid in the whole complex plane \cite{eliz94b,terr73-183-477}.
\begin{theorem}
\label{bessel1}
We have
\beq \zeta_\Ee (s,c| \vec r) &=& \frac{ \pi^{d/2}}{\sqrt{r_1 \cdot \cdot \cdot r_d}} \frac{\Gamma \left( s- \frac d 2 \right)}{\Gamma (s)} c^{\frac d 2 -s} + \frac{2 \pi ^s c^{\frac{d-2s} 4}} {\Gamma (s) \sqrt{ r_1 \cdot \cdot \cdot r_d}} \nn\\
& &\hspace{-1.0cm}\times\sum_{\vec n \in \Z ^d / \{\vec 0 \}} \left[ \frac{ n_1^2} {r_1} + ... + \frac{n_d^2} {r_d} \right] ^{\frac 1 2 \left( s- \frac d 2 \right) } K_{\frac d  2 -s} \left( 2 \pi \sqrt c \left( \frac{n_1^2} {r_1} + ... + \frac{n_d^2} {r_d} \right)^{1/2} \right) .\nn\eeq
\end{theorem}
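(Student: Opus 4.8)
The plan is to start from the Mellin representation of Lemma \ref{mellinep} and to trade the slowly convergent heat--kernel sum for a rapidly convergent one by means of Poisson resummation (Lemma \ref{poisson}). First I would note that for $\Re s>d/2$ the lattice sum in Lemma \ref{mellinep} factorizes,
\[
\sum_{\vec m\in\Z^d}e^{-t(r_1m_1^2+\cdots+r_dm_d^2+c)}=e^{-ct}\prod_{j=1}^d\Big(\sum_{m_j\in\Z}e^{-tr_jm_j^2}\Big),
\]
so that Lemma \ref{poisson} applies to each of the $d$ one--dimensional sums (the hypothesis $\Re r_j>0$ is trivially met since $r_j,t>0$). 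This produces
\[
\sum_{\vec m\in\Z^d}e^{-t(\cdots)}=\frac{\pi^{d/2}}{t^{d/2}\sqrt{r_1\cdots r_d}}\,e^{-ct}\sum_{\vec n\in\Z^d}\exp\!\Big(-\frac{\pi^2}{t}\Big(\frac{n_1^2}{r_1}+\cdots+\frac{n_d^2}{r_d}\Big)\Big).
\]
Substituting this into the integral of Lemma \ref{mellinep} and interchanging the sum (absolutely and locally uniformly convergent on $(0,\infty)$ in $t$) with the integral, I would split off the $\vec n=\vec 0$ summand; by the computation already carried out in the text this contributes exactly $\zeta_\Ee^{\vec 0}(s,c|\vec r)=\frac{\pi^{d/2}}{\sqrt{r_1\cdots r_d}}\frac{\Gamma(s-d/2)}{\Gamma(s)}c^{d/2-s}$, the first summand in the assertion.

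The central step is the remaining sum. Writing $A_{\vec n}:=n_1^2/r_1+\cdots+n_d^2/r_d>0$ for $\vec n\neq\vec 0$, what is left is
\[
\frac{\pi^{d/2}}{\sqrt{r_1\cdots r_d}\,\Gamma(s)}\sum_{\vec n\in\Z^d\setminus\{\vec 0\}}\int_0^\infty t^{s-d/2-1}e^{-ct-\pi^2A_{\vec n}/t}\,dt,
\]
and each of these integrals is a modified Bessel function. Setting $\nu=d/2-s$ so that the power of $t$ is $t^{-\nu-1}$, and rescaling $t\mapsto t/c$, one finds $\int_0^\infty t^{-\nu-1}e^{-ct-\pi^2A_{\vec n}/t}\,dt=c^{\nu}\int_0^\infty t^{-\nu-1}e^{-t-\pi^2cA_{\vec n}/t}\,dt$, which by Definition \ref{bessel} with $z=2\pi\sqrt{cA_{\vec n}}$ equals $2c^{\nu}(\pi\sqrt{cA_{\vec n}})^{-\nu}K_\nu(2\pi\sqrt{cA_{\vec n}})$. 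Putting $\nu=d/2-s$ back in and collecting the powers of $\pi$, $c$ and $A_{\vec n}$ then gives precisely the Bessel sum displayed in the Theorem, with $2\pi\sqrt{cA_{\vec n}}=2\pi\sqrt c\,(n_1^2/r_1+\cdots+n_d^2/r_d)^{1/2}$.

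It remains to settle the range of validity. The manipulations above are justified a priori only for $\Re s>d/2$, but since $K_\mu(z)=O(e^{-z})$ as $z\to\infty$, the series $\sum_{\vec n\neq\vec 0}A_{\vec n}^{\frac12(s-d/2)}K_{d/2-s}(2\pi\sqrt{cA_{\vec n}})$ converges locally uniformly for every $s\in\C$ and hence defines an entire function of $s$; after division by $\Gamma(s)$ it is still entire. Consequently the right--hand side of the claimed identity is meromorphic on all of $\C$, with poles coming only from the explicit factor $\Gamma(s-d/2)/\Gamma(s)$, and by the identity theorem it coincides with $\zeta_\Ee(s,c|\vec r)$ on $\Re s>d/2$ and therefore furnishes its analytic continuation everywhere. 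The only genuinely delicate point is the bookkeeping of the fractional exponents through the rescaling $t\mapsto t/c$, so that $c^{(d-2s)/4}$ and $A_{\vec n}^{\frac12(s-d/2)}$ emerge with the correct powers; the interchange of sum and integral and the invocation of Poisson resummation are routine thanks to the exponential decay.
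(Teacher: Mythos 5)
Your proof is correct and follows exactly the route the paper indicates: the Mellin representation of Lemma \ref{mellinep}, Poisson resummation (Lemma \ref{poisson}) applied to each of the $d$ sums, splitting off the $\vec n=\vec 0$ term (the paper's $\zeta_\Ee^{\vec 0}$), and recognizing the remaining integrals as modified Bessel functions via Definition \ref{bessel}. The bookkeeping of the powers of $\pi$, $c$, and $A_{\vec n}$ checks out, and the concluding remark on analytic continuation is the standard justification for extending the identity beyond $\Re s>d/2$.
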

\begin{exercise}
Show Theorem \ref{bessel1} along the lines indicated.
\end{exercise}
From Definition \ref{bessel} it is clear that the Bessel function is exponentially damped for large $\Re z^2$. As a result the above representation is numerically very effective as long as the argument of $K_{d/2-s}$ is large. The terms involving the Bessel functions are analytic for all values of $s$, the first term contains poles. As an immediate consequence of the properties of the Gamma-function one can show the following properties of the Epstein zeta function.
\begin{theorem}\label{the2.11}
For $d$ even, $\zeta_\Ee (s,c| \vec r)$ has poles at $s=\frac d 2, \frac d 2 -1, ..., 1$, whereas for $d$ odd they are located at $s= \frac d 2, \frac d 2 -1, ..., \frac 1 2, - \frac{2l+1} 2$, $l\in \N_0$. Furthermore,
\beq \res \zeta _\Ee (j,c| \vec r ) &=& \frac{ (-1)^{\frac d 2 +j} \pi ^{\frac j 2} c^{\frac d 2 -j} } {\sqrt{r_1 \cdot \cdot \cdot r_d}\,\, \Gamma (j) \Gamma \left( \frac d 2 -j +1\right)} , \nn\\
\zeta_\Ee (-p,c| \vec r ) &=& \left\{ \begin{array}{ll}
0 & \mbox{for $d$ odd}\\
\frac{ (-1)^{\frac d 2 } p! \pi ^{\frac d 2} c^{\frac d 2 +p} } {\sqrt{r_1 \cdot \cdot \cdot r_d} \,\,\Gamma \left( \frac d 2 +p +1\right)} & \mbox{for $d$ even}.\end{array} \right. \nn\eeq
\end{theorem}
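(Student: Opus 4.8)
The plan is to read off everything from the representation in Theorem \ref{bessel1}. Since the series of Bessel terms there is analytic in $s$ on all of $\C$ and the factor $1/\Gamma(s)$ sitting in front of it is entire, the whole singular structure of $\zeta_\Ee(s,c|\vec r)$ is carried by the single term
\beq \frac{\pi^{d/2}}{\sqrt{r_1\cdots r_d}}\,\frac{\Gamma\!\left(s-\frac d2\right)}{\Gamma(s)}\,c^{\frac d2 -s}.\nn\eeq
So the first step is to locate the poles of $\Gamma(s-\frac d2)$, namely the simple poles at $s=\frac d2-n$, $n\in\N_0$, and then to decide which of them survive multiplication by $1/\Gamma(s)$, which has simple zeros exactly at $s=0,-1,-2,\dots$.

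For $d$ even, $\frac d2-n$ is a non-positive integer precisely when $n\ge\frac d2$; at those points the zero of $1/\Gamma(s)$ cancels the pole of $\Gamma(s-\frac d2)$, so only $n=0,1,\dots,\frac d2-1$ contribute, and the poles sit at $s=\frac d2,\frac d2-1,\dots,1$. For $d$ odd, $\frac d2-n$ is always a half-integer, hence never a zero of $1/\Gamma(s)$, so every $n\in\N_0$ gives a pole; separating the non-negative half-integers $\frac d2,\frac d2-1,\dots,\frac12$ from the negative ones yields the stated list $s=\frac d2,\dots,\frac12,-\frac{2l+1}2$, $l\in\N_0$. To get the residue at a surviving point $s=j$, write $j=\frac d2-n$ and use $\Gamma(z)\sim\frac{(-1)^n}{n!\,(z+n)}$ near $z=-n$, i.e. $\Gamma(s-\frac d2)\sim\frac{(-1)^n}{n!\,(s-j)}$; multiplying by the regular factor $\frac{\pi^{d/2}}{\sqrt{r_1\cdots r_d}}\,\frac{c^{\frac d2-j}}{\Gamma(j)}$ and simplifying with $n=\frac d2-j$ gives the asserted formula for $\res\zeta_\Ee(j,c|\vec r)$.

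For the values at $s=-p$, $p\in\N_0$, I would split on the parity of $d$ once more. When $d$ is odd, $-p-\frac d2$ is not a pole of $\Gamma$, so the displayed term is finite there, while its prefactor $1/\Gamma(-p)=0$; the same zero annihilates the Bessel series, so $\zeta_\Ee(-p,c|\vec r)=0$. When $d$ is even, $s=-p$ is a $0\cdot\infty$ situation: $\Gamma(s-\frac d2)$ has a simple pole and $1/\Gamma(s)$ a simple zero there. Exactly as in the proof of Theorem \ref{the2.3}, I set $s=-p+\ep$, use $\Gamma(-p-\frac d2+\ep)\sim\frac{(-1)^{p+d/2}}{(p+\frac d2)!}\,\ep^{-1}$ and $1/\Gamma(-p+\ep)\sim(-1)^p p!\,\ep$, and let $\ep\to0$; the Bessel part again drops out because of its $1/\Gamma(s)$ factor, leaving the stated expression with $\Gamma(\frac d2+p+1)=(\frac d2+p)!$.

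The only real care needed is the cancellation bookkeeping: keeping straight, as a function of the parity of $d$, which poles of $\Gamma(s-\frac d2)$ are killed by the zeros of $1/\Gamma(s)$, and, at the negative integers for even $d$, correctly resolving the resulting indeterminate form via first-order Laurent expansions. Everything else — the analyticity of the Bessel series (already asserted in Theorem \ref{bessel1}), and the algebra of collecting powers of $\pi$, $c$ and the $r_j$ — is routine, and the parallel computations already carried out for $\zeta_H$ in Theorem \ref{the2.3} serve as a direct template.
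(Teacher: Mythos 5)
Your proposal is correct and is precisely the approach the paper has in mind: the paper does not supply a proof of Theorem \ref{the2.11} but instead poses it as an exercise, telling the reader to ``use Theorem \ref{bessel1} and properties of the Gamma-function,'' which is exactly what you do (isolate the $\Gamma(s-\frac d2)/\Gamma(s)$ prefactor as the sole source of singularities, track which poles of $\Gamma(s-\frac d2)$ survive the zeros of $1/\Gamma(s)$ by parity of $d$, and resolve the $0\cdot\infty$ indeterminacy at $s=-p$ via Laurent expansion as in Theorem \ref{the2.3}). One small remark: your computation correctly yields $\pi^{d/2}$ in the residue formula, whereas the printed statement shows $\pi^{j/2}$; the latter appears to be a typo in the paper, so ``gives the asserted formula'' should more precisely read ``gives the asserted formula with the evident correction $\pi^{j/2}\to\pi^{d/2}$.''
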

\begin{exercise}
Use Theorem \ref{bessel1} and properties of the Gamma-function to show Theorem \ref{the2.11}.
\end{exercise}
This concludes the list of examples for zeta functions to be considered in what follows. A natural question is what the motivations are to consider these zeta functions. Before we describe a few aspects relating to this question let us mention how all these zeta functions, and many others, result from a common principle.
\section{Boundary value problems and associated zeta functions}
In this section we explain how the considered zeta functions, and others, are all associated with eigenvalue problems of (partial) differential operators.
\begin{example}
Let $M=[0,L]$ be some interval and consider the {\it Dirichlet} boundary value problem.
\beq P \phi_n (x) := - \frac{\partial^2}{\partial x^2} \phi _n (x) = \lambda_n \phi_n (x), \quad \quad \phi_n (0) = \phi_n (L) =0 .\nn\eeq
The solutions to the boundary value problem have the general form
\beq \phi _n (x) = A \sin (\sqrt {\lambda_n} x ) + B \cos ( \sqrt{\lambda_n} x) .\nn\eeq
Imposing the Dirichlet boundary condition shows we need
\beq \phi _n (0) = B=0, \quad \quad \phi_n (L) = A \sin (L \sqrt{\lambda_n}) = 0, \nn\eeq
which implies $$\lambda_n = \frac{n^2 \pi^2} {L^2} , \quad \quad n\in\N.$$
We only need to consider $n\in\N$ because non-positive integers lead to linearly dependent eigenfunctions. The zeta function $\zeta_P (s)$ associated with this boundary value problem is defined to be the sum over all eigenvalues raised to the power $(-s)$, namely
\beq \zeta_P (s) = \sumne \lambda_n^{-s} ,\quad \quad \Re s > \frac 1 2.\nn\eeq
So here the associated zeta function is a multiple of the zeta function of Riemann,
\beq \zeta_P (s) = \sumne \left( \frac{n \pi} L \right)^{-2s} = \left( \frac L \pi \right)^{2s} \zeta_R (2s) .\nn\eeq
\end{example}
\begin{example}
The previous example can be easily generalized to higher dimensions. We consider explicitly two dimensions; for the higher dimensional situation
see \cite{ambj83-147-1}. Let $M=\{ (x,y) | x\in [0,L_1], y\in [0,L_2]\}.$ We consider the boundary value problem with Dirichlet boundary conditions on $M$, that is
\beq P \phi_{n,m} (x,y) &=& \left(- \frac{\partial^2}{\partial x^2} - \frac{\partial ^2} {\partial y^2} + c \right) \phi _{n,m} (x,y) = \lambda_{n,m} \phi_{n,m} (x,y) , \nn\\
\phi_{n,m} (0,y) &=& \phi_{n,m} (L_1,y)= \phi_{n,m} (x,0) = \phi _{n,m} (x,L_2) =0 .\nn\eeq
Using the process of separation of variables, eigenfunctions are seen to be
\beq \phi_{n,m} (x,y) = A \sin \left(\frac{n\pi x}{L_1}\right) \sin \left(\frac{m\pi y}{L_2}\right) , \nn\eeq
with the eigenvalues $$\lambda_{n,m} = \left( \frac{ n\pi} {L_1} \right)^2 + \left( \frac{m\pi} {L_2}\right)^2 + c, \quad \quad n,m\in\N.$$
The associated zeta function therefore is
\beq \zeta_P (s) = \sumne \summe \left[ \left( \frac{n \pi } {L_1}\right)^2 + \left( \frac{ m \pi} {L_2} \right)^2 +c \right] ^{-s} , \nn\eeq
which can be expressed in terms of the Epstein zeta function given in Definition \ref{epdef} as follows,
\beq \zeta_P (s) &=& \frac 1 4 \zeta_\Ee \left( s,c\left| \left( \left( \frac \pi {L_1} \right)^2,\left( \frac \pi {L_2} \right)^2\right)\right)  \right. \nn\\
& & - \frac 1 4 \zeta_\Ee \left( s,c\left| \left( \frac \pi {L_1} \right)^2 \right)\right. -
\frac 1 4 \zeta_\Ee \left( s,c\left| \left( \frac \pi {L_2} \right)^2 \right)\right. + \frac 1 4 c^{-s}.    \eeq
\end{example}
\begin{example}
Similarly one can consider periodic boundary conditions instead of Dirichlet boundary conditions, this means the manifold $M$ is given by $M=S^1\times S^1$. In this case the eigenfunctions have to satisfy
\beq \phi_{n,m} (0,y) &=& \phi_{n,m} (L_1,y) , \quad \quad \frac \partial {\partial x} \phi_{n,m} (0,y) = \frac \partial {\partial x} \phi_{n,m} (L_1,y) , \nn\\
\phi_{n,m} (x,0) &=& \phi_{n,m} (x, L_2) , \quad \quad \frac \partial {\partial y} \phi_{n,m} (x,0) = \frac \partial {\partial y} \phi_{n,m} (x,L_2) .\nn\eeq
This shows \beq \phi_{n,m} (x,y) = A e^{i \frac{2 \pi n} {L_1} x} \,\, e^{i \frac{2\pi m} {L_2} y} , \nn\eeq
which implies for the eigenvalues $$ \lambda_{n,m} = \left( \frac{ 2\pi n} {L_1}\right)^2 + \left( \frac{ 2\pi m} {L_2} \right)^2 + c, \quad \quad (n,m) \in \Z^2.$$ The associated zeta function therefore is
\beq \zeta_P (s) = \zeta _\Ee \left(s,c| \vec r \right), \quad \quad \vec r = \left( \left(\frac{2\pi} {L_1}\right)^2,  \left(\frac{2\pi} {L_2}\right)^2\right).\nn\eeq
Clearly, in $d$ dimensions one finds \beq \zeta_P (s) = \zeta_\Ee \left( s,c| \vec r \right), \quad \quad \vec r = \left( \left( \frac{2\pi} {L_1} \right)^2 , ..., \left( \frac{2\pi } {L_d} \right)^2 \right).\nn\eeq
\end{example}
\begin{example}\label{exHO}
As a final example we consider the Schr{\"o}dinger equation of atoms in a harmonic oscillator potential. In this case $M=\R^3$, and the eigenvalue equation reads
\beq \left\{ - \frac {\hbar^2} {2m} \Delta + \frac m 2 \left( \omega_1 x^2+\omega_2 y^2 + \omega _3 z^2\right) \right\} \phi_{n_1,n_2,n_3} (x,y,z) = \lambda_{n_1,n_2,n_3} \phi_{n_1,n_2,n_3} (x,y,z).\nn\eeq
This differential equation is augmented by the condition that eigenfunctions must be square integrable, $\phi_{n_1,n_2,n_3} (x,y,z) \in \mathcal{L}^2 (\R^3).$ As is well known, this gives the eigenvalues $$\lambda_{n_1,n_2,n_3} = \hbar \omega_1 \left( n_1 + \frac 1 2 \right) + \hbar \omega_2 \left( n_2 + \frac 1 2 \right) + \hbar \omega_3 \left( n_3 + \frac 1 2 \right) , \quad (n_1,n_2,n_3)\in\N_0^3.$$
This clearly leads to the Barnes zeta function
\beq \zeta_P (s) = \zeta_\Bb (s,c| \vec r),\nn\eeq
where \beq c&=& \frac 1 2 \hbar (\omega_1 + \omega_2 + \omega_3), \quad \quad \vec r = \hbar \left(\omega_1,\omega_2,\omega_3\right).\nn\eeq
If $M=\R$ is chosen the Hurwitz zeta function results.
\end{example}
 The above examples illustrate how the zeta functions considered in Section 2 are all related in a natural way to eigenvalues of specific boundary value problems. In fact, zeta functions in a much more general context are studied in great detail. For our purposes the relevant setting is the setting of Laplace-type operators on a Riemannian manifold $M$, possibly with a boundary $\partial M$. Laplace-type means the operator $P$ can be written as $$P = - g^{jk} \nabla _j^V \nabla_k^V - E,$$ where $g^{jk}$ is the metric of $M$, $\nabla^V$ is the connection on $M$ acting on a smooth vector bundle $V$ over $M$, and where $E$ is an endomorphism of $V$. Imposing suitable boundary conditions, eigenvalues $\lambda_n$ and eigenfunctions $\phi_n$ do exist, $$P \phi_n (x) = \lambda_n \phi_n (x), $$ and assuming $\lambda_n>0$ the zeta function is defined to be $$\zeta _P (s) = \sum_{n=1}^\infty \lambda_n ^{-s}$$ for $\Re s$ sufficiently large. If there are modes with $\lambda_n=0$ those have to be excluded from the sum. Also, if finitely many eigenvalues are negative the zeta function can be defined by choosing nonstandard definitions of the principal value for the argument of complex numbers, but we will not need to consider those cases.
\section{(Some) Motivations to consider zeta functions}
There are many situations where properties of zeta functions in the above context of Laplace-type operators are needed. In the following we present a few of them, but many more can be found for example in the context of number theory \cite{apos76b,apos90b,dave67b,titc51b} and quantum field theory \cite{bord01-353-1,buch92b,byts03b,byts96-266-1,dett92-377-252,dunn08-41-304006,dunn05-94-072001,eliz95b,espo94b,espo98b,kirs02b,sach92-65-652}.
\subsection{Can one hear the shape of a drum?}
Let $M$ be a two-dimensional membrane representing a drum with boundary $\partial M$. The drum is fixed along its boundary. Then possible vibrations of the drum and its fundamental tones are described by the eigenvalue problem \beq \left( - \frac{\partial^2}{\partial x^2} - \frac {\partial^2} {\partial y^2} \right) \phi_n (x,y) = \lambda_n \phi_n (x,y) , \quad \quad \left.\phi_n (x,y) \right|_{(x,y) \in \partial M} =0.\nn\eeq
Here, $(x,y)$ denotes the variables in the plane, the eigenfunctions $\phi_n (x,y)$ describe the amplitude of the
vibrations and $\lambda_n$ its fundamental tones. In 1966 Kac \cite{kac66-73-1} asked if just
by listening with a perfect ear, so by knowing all the fundamental tones
$\lambda_n$, it is possible to hear the shape of the drum. One problem in answering this question is, of course, that in
general it will be impossible to write down the eigenvalues $\lambda_n$
in a closed form and to read off relations with the shape of the drum
directly. Instead one has to organize the spectrum intelligently in
form of a spectral function to reveal relationships between the eigenvalues
and the shape of the drum. In this context a particularly fruitful spectral
function is the heat kernel
$$
K(t) = \sum_{n=1}^\infty e^{-\lambda_n t},
$$
which as $t$ tends to zero clearly diverges. Given that some relations between the fundamental tones and properties of the drum are hidden in the $t\to 0$ behavior let us consider this asymptotic behavior very closely. Before we come back to the setting of the drum, let us use a few examples to get an idea what the structure of the $t\to 0$ behavior of the heat kernel is expected to be.
\begin{example}\label{heatcircle} Let $M=S^1$ be the circle with circumference $L$ and let $P = - \partial^2/\partial x^2$. Imposing periodic boundary conditions eigenvalues are $$\lambda_k = \left( \frac{2\pi k} L \right)^2 , \quad \quad k\in\Z ,$$ and the heat kernel reads
\beq K_{S^1} (t) = \sum_{k=-\infty}^\infty e^{- \left( \frac{2\pi k} L \right)^2t} .\nn\eeq
From Lemma \ref{poisson} we find the $t\to 0$ behavior $$K_{S^1} (t) = \frac 1 {\sqrt{4\pi t}} L +
\left( \mbox{exponentially damped terms}\right). $$ Note that with obvious notation this could be written as
$$K_{S^1} (t) =\frac 1 {\sqrt{4\pi t}} \mbox{vol} (M) + \left( \mbox{exponentially damped terms}\right). $$
\end{example}
\begin{example} \label{heattorus} The heat kernel for the $d$-dimensional manifold $M=S^1\times \cdot \cdot \cdot \times S^1$ with $P=-\Delta$ clearly gives a product of the above and thus $$K_M (t) = K_{S^1} (t) \times \cdot \cdot \cdot \times K_{S^1} (t) = \frac 1 {(4\pi t)^{d/2}} \mbox{vol} (M) +
\left( \mbox{exponentially damped terms}\right). $$
\end{example}
\begin{example} \label{massivetorus} To avoid the impression that there is always just one term that is not exponentially damped consider $M$ as above but $P=-\Delta + m^2$. Then
\beq K(t) &=& e^{-m^2 t} K_M (t) = e^{-m^2 t} \left( \frac 1 {(4\pi t)^{d/2}} \mbox{vol} (M) +
\mbox{exponentially damped terms} \right)\nn\\
&=& \frac 1 {(4\pi)^{d/2}} \mbox{vol} (M) \sum_{\ell =0}^\infty \frac{(-1)^\ell} {\ell !} m^{2\ell } t^{\ell - \frac d 2}
+ \left(\mbox{exponentially damped terms}\right) .\nn\eeq
\end{example}
In fact, the structure of the heat kernel observed in this last example is the structure observed for the general class of Laplace-type operators.
\begin{theorem}\label{heatnobound}
Let $M$ be a $d$-dimensional smooth compact Riemannian manifold without boundary and let $$P= - g^{jk} \nabla_j^V \nabla_k^V - E ,$$ where $g^{jk}$ is the metric of $M$, $\nabla^V$ is the connection on $M$ acting on a smooth vector bundle $V$ over $M$, and where $E$ is an endomorphism of $V$.
Then as $t\to 0$, $$K (t) \sim \sum_{k=0}^\infty a_k \,\,t^{k-d/2}$$ with the so-called heat kernel coefficients $a_k$.
\end{theorem}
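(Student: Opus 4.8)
The plan is to prove Theorem~\ref{heatnobound} by the classical parametrix construction for the heat operator $\partial_t + P$. The object of interest is the smooth integral kernel $K(t,x,y)$ of $e^{-tP}$; since $M$ is compact and $P$ is an elliptic Laplace-type operator, $e^{-tP}$ is trace class for $t>0$ and
\[
K(t)=\Tr e^{-tP}=\sum_{n}e^{-\lambda_n t}=\int_M K(t,x,x)\,dx,
\]
so it is enough to produce a \emph{uniform} asymptotic expansion of $K(t,x,x)$ on the diagonal and then integrate it over $M$. The first step would be to localise: fix $x_0\in M$, work in geodesic normal coordinates on a ball of radius below the injectivity radius, write $r=r(x,y)$ for the geodesic distance, and make the ansatz
\[
H_N(t,x,y)=(4\pi t)^{-d/2}\,e^{-r^2/(4t)}\sum_{j=0}^{N}t^{j}\,\Phi_j(x,y).
\]
Inserting this into $(\partial_t+P_x)H_N=0$ and collecting powers of $t$ turns the problem into a hierarchy of first-order transport equations for the $\Phi_j$ along radial geodesics from $x_0$. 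These are solved recursively: $\Phi_0$ is the square root of the inverse Jacobian of the exponential map (so that $\Phi_0(x_0,x_0)=1$, which is exactly the normalisation forcing $H_N(t,\cdot,x_0)\to\delta_{x_0}$ as $t\to 0$), and each $\Phi_j$, $j\ge 1$, is obtained from $\Phi_{j-1}$ by a single integration along the geodesic. After multiplication by a cutoff supported inside the injectivity radius one gets a globally defined smooth approximate heat kernel with
\[
(\partial_t+P_x)H_N(t,x,y)=(4\pi t)^{-d/2}e^{-r^2/(4t)}\,t^{N}\,(P_x\Phi_N)(x,y)+(\text{cutoff terms}),
\]
which is $O(t^{N-d/2})$ uniformly and, crucially, vanishes to all orders as $t\to 0$ away from the diagonal.

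The second step converts the approximation into the true kernel via Duhamel's principle: with $R_N(t)=(\partial_t+P)H_N(t)$,
\[
K(t,x,y)-H_N(t,x,y)=-\int_0^t\!\!\int_M K(t-s,x,z)\,R_N(s,z,y)\,dz\,ds,
\]
and iterating this identity (Levi's method) together with the bound $R_N=O(t^{N-d/2})$ gives $\|K(t,\cdot,\cdot)-H_N(t,\cdot,\cdot)\|_{C^0(M\times M)}=O(t^{N+1-d/2})$. Restricting to $y=x$ (so $r=0$ and the Gaussian factor equals $1$) and integrating over $M$ yields
\[
K(t)=(4\pi t)^{-d/2}\sum_{j=0}^{N}t^{j}\Big(\int_M\Phi_j(x,x)\,dx\Big)+O(t^{N+1-d/2}),
\]
which, $N$ being arbitrary, is precisely the claimed asymptotic series with $a_k=(4\pi)^{-d/2}\int_M\Phi_k(x,x)\,dx$; note that $\Phi_0(x,x)=1$ recovers $a_0=(4\pi)^{-d/2}\,\mathrm{vol}(M)$, consistent with Examples~\ref{heatcircle}--\ref{massivetorus}.

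The main obstacle is the uniform remainder estimate: making rigorous that the Levi iteration converges and that the error is genuinely of order $t^{N+1-d/2}$ uniformly over the compact $M\times M$, hand in hand with checking that the transport equations are smoothly and globally solvable after introducing the cutoff. An alternative route that avoids the explicit Gaussian ansatz is to construct a parametrix for the resolvent $(P-\lambda)^{-1}$ in Seeley's pseudodifferential calculus, expand its symbol into pieces homogeneous in $(\xi,\lambda^{1/2})$, and then recover $e^{-tP}=\tfrac{1}{2\pi i}\oint e^{-t\lambda}(P-\lambda)^{-1}\,d\lambda$ term by term; the half-integer powers of $t$ that appear formally cancel after integrating the symbols over $\xi$, by parity, leaving exactly the integer-step expansion stated. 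Either way the coefficients $a_k$ are integrals over $M$ of universal polynomials in the curvature of $M$, the curvature of $\nabla^V$, the endomorphism $E$, and their covariant derivatives.
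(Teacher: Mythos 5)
The paper does not actually prove Theorem~\ref{heatnobound}: its ``proof'' consists entirely of the citation to Gilkey's monograph \cite{gilk95b}. Your proposal therefore cannot be compared against an in-paper argument; instead it supplies a correct outline of exactly the classical parametrix construction (the Minakshisundaram--Pleijel ansatz solved via transport equations, globalised by a cutoff, corrected by Duhamel/Levi iteration) that one finds in the cited reference, so in substance you are reproducing the standard proof the paper delegates to the literature. The outline is sound: the reduction $K(t)=\int_M K(t,x,x)\,dx$, the Gaussian ansatz $H_N$, the recursive transport equations for $\Phi_j$ with $\Phi_0$ the square root of the Jacobian of $\exp$, the $O(t^{N-d/2})$ remainder, and the Duhamel identity are all as in the standard treatment. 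Two minor points worth being explicit about if you were to flesh this out. First, the Levi iteration requires an a priori bound on $K(t,x,y)$ itself (e.g.\ a rough Gaussian upper bound, or simply $\sum_n e^{-\lambda_n t}|\varphi_n(x)\varphi_n(y)|$ controlled by Weyl's law and Sobolev embedding) before the iterated Duhamel series can be summed; without some such bound the claimed $O(t^{N+1-d/2})$ estimate does not follow formally. Second, your closing remark about the Seeley resolvent approach and ``parity cancellation of half-integer powers'' is correct but slightly misleadingly phrased: the absence of half-integer powers here is a boundaryless phenomenon (odd-order symbol terms integrate to zero over $\xi\in\R^d$), and this is the precise mechanism that fails once $\partial M\neq\emptyset$, producing the half-integer coefficients in Theorem~\ref{heatbound}. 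Stating that contrast would tie your proof cleanly to the rest of the section.
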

\begin{proof} See, e.g., \cite{gilk95b}. \end{proof}
In Example \ref{massivetorus} one sees that $$a_k = \frac 1 {(4\pi)^{d/2}} \frac{(-1)^k}{k!} m^{2k} \mbox{vol}(M).$$ In general, the heat kernel coefficients are significantly more complicated and they depend upon the geometry of the manifold $M$ and the endomorphism $E$ \cite{gilk95b}.

Up to this point we have only considered manifolds without boundary. In order to consider in more detail questions relating to the drum, let us now see what relevant changes in the structure of the small-$t$ heat kernel expansion occur if boundaries are present.
\begin{example}\label{heatinterval}
Let $M=[0,L]$ and $P=-\partial^2/\partial x^2$ with Dirichlet boundary conditions imposed. Normalized eigenfunctions are then given by $$\varphi_\ell (x) = \sqrt{\frac 2 L} \sin\left( \frac{\pi \ell x} L \right)$$ and the associated eigenvalues are $$\lambda_\ell = \left( \frac{\pi \ell} L \right)^2, \quad \quad \ell \in \N.$$ Using Lemma \ref{poisson} this time we obtain \beq K(t) = \frac 1 {\sqrt{4\pi t}} \mbox{vol}(M) - \frac 1 2 + (\mbox{exponentially damped terms}).\label{intbounddir}\eeq
Notice that in contrast to previous results we have integer and half-integer powers in $t$ occurring.
\end{example}
\begin{exercise}
There is a more general version of the Poisson resummation formula than the one given in Lemma \ref{poisson}, namely
\beq \sum_{\ell =-\infty} ^\infty e^{-t (\ell +c)^2} = \sqrt{\frac \pi t } \sum_{\ell =-\infty}^\infty e^{-\frac{\pi^2} t \ell^2 - 2\pi i \ell c} .\label{genpoisson}\eeq Apply Exercise \ref{expoi} with a suitable function $F(x)$ to show equation (\ref{genpoisson}).
\end{exercise}
\begin{exercise}\label{localheat}
Consider the setting described in Example \ref{heatinterval}. The local heat kernel is defined as the solution of the equation
\beq \left( \frac \partial {\partial t} - \frac{\partial^2}{\partial x^2}\right) K(t,x,y) =0 \nn\eeq
with the initial condition
\beq \lim_{t\to 0} K(t,x,y) = \delta (x,y) .\nn\eeq
In terms of the quantities introduced in Example \ref{heatinterval} it can be written as \beq K(t,x,y) = \sum_{\ell =1}^\infty \varphi _\ell (x) \varphi_\ell (y) e^{-\lambda_\ell t} .\nn\eeq
Use the resummation (\ref{genpoisson}) for $K(t,x,y)$ and the fact that $$K(t) = \intl _0^L K(t,x,x) dx$$ to rediscover the above result (\ref{intbounddir}).
\end{exercise}
\begin{exercise} \label{heatboundint} Let $M=[0,L]$ and $$ P= - \frac{\partial ^2} {\partial x^2} + m^2$$ with Dirichlet boundary conditions imposed. Find the small-$t$ asymptotics of the heat kernel.\end{exercise}
\begin{exercise} \label{highdbound} Let $M=[0,L] \times S^1\times\cdot \cdot \cdot \times S^1$ be a $d$-dimensional manifold and $$P=-\frac{\partial^2}{\partial x^2} + m^2.$$ Impose Dirichlet boundary conditions on $[0,L]$ and periodic boundary conditions on the circle factors. Find the small-$t$ asymptotics of the heat kernel.\end{exercise}
As the above examples and exercises suggest, one has the following result.
\begin{theorem}\label{heatbound} Let $M$ be a $d$-dimensional smooth compact Riemannian manifold with smooth boundary and
let $$P= - g^{jk} \nabla_j^V \nabla_k^V - E ,$$ where $g^{jk}$ is the metric of $M$, $\nabla^V$ is the connection on $M$ acting on a smooth vector bundle $V$ over $M$, and where $E$ is an endomorphism of $V$. We impose Dirichlet boundary conditions.
Then as $t\to 0$, $$K (t) \sim \sum_{k=0,1/2,1,...}^\infty a_k \,\, t^{k-d/2}$$ with the heat kernel coefficients $a_k$.
\end{theorem}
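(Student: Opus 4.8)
The plan is to reduce the global statement to two local computations — one in the interior of $M$ and one in a collar of $\partial M$ — by localizing the heat kernel, and then to read off which powers of $t$ each piece contributes. Throughout one works with the smooth integral kernel $K(t,x,y)$ of $e^{-tP}$ on $(0,\infty)\times M\times M$ and uses $K(t)=\int_M K(t,x,x)\,dx$.

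The locality input is that, as $t\to0$, $K(t,x,y)$ is determined up to an error that is $O(t^N)$ for every $N$ (uniformly, in every $C^m$-norm) by the geometry in an arbitrarily small neighbourhood of the diagonal segment joining $x$ and $y$. Concretely, one fixes a finite cover of $M$ by interior coordinate balls and boundary half-balls, a subordinate partition of unity $\{\chi_i\}$, and cutoffs $\tilde\chi_i$ equal to $1$ near $\mathrm{supp}\,\chi_i$, builds a local parametrix $K_i(t,x,y)$ in each chart, and shows that $\sum_i\tilde\chi_i(x)\,K_i(t,x,y)\,\chi_i(y)$ differs from $K(t,x,y)$ by such a negligible kernel. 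This is Duhamel's principle together with the Gaussian off-diagonal bound $|K(t,x,y)|\le C\,t^{-d/2}e^{-\mathrm{dist}(x,y)^2/5t}$, which kills the contribution of pairs with $\mathrm{dist}(x,y)$ bounded below.

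The interior charts contribute the Minakshisundaram--Pleijel expansion underlying Theorem \ref{heatnobound}: there $K(t,x,x)\sim(4\pi t)^{-d/2}\sum_{k\ge0}a_k(x)\,t^k$ with $a_k(x)$ a universal polynomial in the jets at $x$ of $g$, of the curvature of $\nabla^V$, and of $E$; integrating the cutoff version over $M$ produces $\sum_{k\ge0}b_k\,t^{k-d/2}$, only integer shifts. For a boundary chart one passes to boundary normal coordinates $(x',x_d)$ with $x_d=\mathrm{dist}(\cdot,\partial M)\ge0$, in which $P$ is, to leading order, the flat Dirichlet Laplacian on the half-space $\R^{d-1}\times\R_{\ge0}$. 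For that model the Dirichlet heat kernel is given exactly by the method of images, $K^{\mathrm{fl}}(t,x,y)=(4\pi t)^{-d/2}\bigl(e^{-|x-y|^2/4t}-e^{-|x-\bar y|^2/4t}\bigr)$ with $\bar y$ the reflection of $y$ across $\{x_d=0\}$, so on the diagonal the boundary correction is $-(4\pi t)^{-d/2}e^{-x_d^2/t}$. Curvature and lower-order terms are then folded in by a Volterra (Duhamel) iteration whose $j$-th term picks up an extra factor $t^{j/2}$ after the collar integration; the elementary fact forcing the half-integer powers is $\int_0^\infty x_d^{2m}e^{-x_d^2/t}\,dx_d=c_m\,t^{m+1/2}$. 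Integrating the resulting expansion over the collar $\{0\le x_d<\varepsilon\}\times\partial M$ gives a contribution $\sum_{k\in\frac12+\N_0}c_k\,t^{k-d/2}$.

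Adding the interior and collar contributions yields $K(t)\sim\sum_{k\in\frac12\N_0}a_k\,t^{k-d/2}$, which is the assertion; the half-integer coefficients are integrals over $\partial M$ of universal polynomials in the second fundamental form, $E|_{\partial M}$, and their covariant derivatives. The main obstacle is the boundary parametrix: one must arrange the Duhamel iteration in the half-space model so that each iterate solves the heat equation with the correct Dirichlet condition modulo a term one order smaller in $t^{1/2}$, show that the iteration terminates modulo $O(t^N)$ with remainders that stay $O(t^N)$ after integration \emph{uniformly up to the boundary}, and check that the change from geodesic to boundary normal coordinates does not spoil smoothness of the symbol as $x_d\to0$. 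Turning the formal series into a genuine asymptotic expansion by controlling these remainders is the real content; the rest is bookkeeping of Gaussian integrals.
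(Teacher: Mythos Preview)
The paper does not actually prove this theorem: its entire proof reads ``See, e.g., \cite{gilk95b}.'' So there is no argument in the paper to compare against; your sketch is a genuine proof outline where the paper simply defers to Gilkey's book. Your approach --- localize via a partition of unity and Duhamel, use the Minakshisundaram--Pleijel interior parametrix for the integer powers, and build a boundary parametrix from the half-space Dirichlet model by the method of images plus a Volterra iteration to capture the half-integer powers --- is precisely the standard route taken in the cited reference, so in that sense you are reconstructing what the paper points to. One small bookkeeping remark: the collar integral does not contribute \emph{only} half-integer powers; the direct (unreflected) Gaussian in the boundary chart still produces integer-shift terms which must be combined with the interior-chart contributions to give the full $a_k$ for integer $k$. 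Only the reflected piece, after the normal integration $\int_0^\infty x_d^{2m}e^{-x_d^2/t}\,dx_d=c_m\,t^{m+1/2}$, generates the genuinely new half-integer terms. With that adjustment your outline is sound, and you have correctly flagged the uniform-up-to-the-boundary control of the Duhamel remainders as the only place where real analytic work is needed.
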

\begin{proof} See, e.g., \cite{gilk95b}. \end{proof}
As for the manifold without boundary case, Theorem \ref{heatnobound}, the heat kernel coefficients depend upon the geometry of the manifold $M$ and the endomorphism $E$, and in addition on the geometry of the boundary. Note, however, that in contrast to Theorem \ref{heatnobound} the small-$t$ expansion contains integer and half-integer powers in $t$.

The same structure of the small-$t$ asymptotics is found for other boundary conditions like Neumann or Robin, see \cite{gilk95b}, and the coefficients then also depend on the boundary condition chosen.
In particular, for Dirichlet boundary conditions one can show the identities
\beq a_0 = (4 \pi)^{-d/2} \mbox{vol}(M), \quad \quad a_{1/2} = (4\pi )^{-(d-1)/2} \left( - \frac 1 4 \right) \mbox{vol}(\partial M),\label{leadcoef}\eeq
a result going back to McKean and Singer \cite{mcke67-1-43}. In the context of the drum, what the formula shows is that
by listening with a perfect ear one can indeed hear certain properties
like the area of the drum and the circumference of its boundary. But as has been
shown by Gordon, Webb and Wolpert \cite{gord92-27-134}, one cannot hear all details of the
shape.
\begin{exercise}
Use Exercise \ref{highdbound} to verify the general formulas (\ref{leadcoef}) for the heat kernel coefficients.
\end{exercise}
Instead of using the heat kernel coefficients to make the above statements, one could equally well have used zeta function properties for equivalent statements. Consider the setting of Theorem \ref{heatbound}.
The associated zeta function is given by $$ \zeta_P (s) = \sum_{n=1}^\infty \lambda_n^{-s},$$ where it follows from Weyl's law \cite{weyl12-71-441,weyl15-39-1} that this series is convergent for $\Re s>d/2$. The zeta function is related with the heat kernel by \beq \zeta_P (s) = \frac 1 {\Gamma (s)} \intl _0^\infty  t^{s-1} K(t) dt , \label{zetaheat}\eeq
where equation (\ref{Gamma}) has been used. This equation allows us to relate residues and function values at certain points with the small-$t$ behavior of the heat kernel. In detail,\beq \mbox{Res } \zeta_P (z) &=& \frac{a_{\frac d 2 -z}} {\Gamma (z)} , \quad \quad z=\frac d 2 , \frac{d-1} 2 , ..., \frac 1 2, - \frac{2n+1} 2 , n \in\N_0, \label{zeheres}\\
\zeta_P (-q) &=& (-1)^q q! a_{\frac d 2 +q} , \quad \quad q\in\N_0. \label{zehefun}\eeq
Keeping in mind the vanishing of the heat kernel coefficients $a_k$ with half-integer index for $\partial M=\emptyset$, see Theorem \ref{heatnobound}, this means for $d$ even the poles are actually located only at $z=d/2,d/2-1,...,1$. In addition, for $d$ odd we get $\zeta_P (-q)=0$ for $q\in\N_0$.
\begin{exercise} Use Theorem \ref{heatbound} and proceed along the lines indicated in the proof of Theorem \ref{the2.3} to show equations (\ref{zeheres}) and (\ref{zehefun}). \end{exercise}

Going back to the setting of the drum properties of the zeta function relate with the geometry of the surface. In particular, from (\ref{leadcoef}) and (\ref{zeheres}) one can show the identities
\beq \res \zeta_P (1) = \frac{\mbox{vol}(M)} {4\pi}, \quad \res \zeta _P \left( \frac 1 2 \right) = - \frac{\mbox{vol}(\partial M)} {2\pi}, \nn\eeq
and the remarks below equation (\ref{leadcoef}) could be repeated.
\subsection{What does the Casimir effect know about a boundary?}
We next consider an application in the context of quantum field theory in finite systems. The importance of this topic lies in the fact that
in recent years, progress in many fields has been triggered by the
continuing miniaturization of all kinds of technical devices. As
the separation between components of various systems tends towards
the nanometer range, there is a growing need to understand every
possible detail of quantum effects due to the small sizes involved.

Very generally speaking, effects resulting from the finite
extension of systems and from their precise form are known as the
Casimir effect. In modern technical devices this effect is
responsible for up to 10$\%$ of the forces encountered in
microelectromechanical systems
\cite{chan01-87-211801,chan01-291-1941}. Casimir forces are of
direct practical relevance in nanotechnology where, e.g., sticking
of mobile components in micromachines might be caused by them
\cite{serr95-4-193}. Instead of fighting the occurrence of the
effect in technological devices, the tendency is now to try and
take technological advantage of the effect.

Experimental progress in recent years has been impressive and for
some configurations allows for a detailed comparison with
theoretical predictions. The best tested situations are those of
parallel plates \cite{bres02-88-041804} and of a plate and a
sphere
\cite{chan01-291-1941,chen06-74-022103,lamo97-78-5,lamo00-84-5673,mohi98-81-4549};
recently also a plate and a cylinder has been considered
\cite{brow05-72-052102,emig06-96-080403}. Experimental data and theoretical
predictions are in excellent agreement, see, e.g.,
\cite{bord01-353-1,decc05-318-37,klim06-39-6485,lamo05-68-201}.
This interplay between theory and experiments, and the intriguing
technological applications possible, are the main reasons for the
heightened interest in this effect in recent years.

In its original form, the effect refers to the situation of two
uncharged, parallel, perfectly conducting plates. As predicted by
Casimir \cite{casi48-51-793}, the plates should attract with a
force per unit area, $F(a)\sim 1/a^4$, where $a$ is the distance
between the plates. Two decades later Boyer \cite{boye68-174-1764} found
a repulsive pressure of magnitude $F (R) \sim 1/R^4$ for a
perfectly conducting spherical shell of radius $R$. Up to this day an intuitive understanding of the opposite signs found is lacking.
One of the main questions in the context of the Casimir effect therefore is how the occurring forces depend on the geometrical properties of the system considered. Said differently, the question is "What does the Casimir effect know about a boundary?" In the absence of general answers one approach consists in accumulating further knowledge by adding bits of understanding based on specific calculations for specific configurations. Several examples will be provided in this section and we will see the dominant role the zeta functions introduced play. However, before we come to specific settings let us briefly introduce the zeta function regularization of the Casimir energy and force that we will use later on.

We will consider the Casimir effect in a quantum field theory of a non-interacting scalar field under {\it external} conditions. The action in this case is \cite{itzy80b}
\beq S[\Phi ] = - \frac 1 2 \intl_M  \Phi (x) \left( \Delta - V(x) \right) \Phi (x) \,\, dx\label{lec29a}\eeq
describing a scalar field $\Phi (x)$ in the background potential $V(x)$. We assume the Riemannian manifold $M$ to be of the form $M=S^1 \times M_s$, where the circle $S^1$ of radius $\beta$ is used to describe finite temperature $T=1/\beta$ and $M_s$, in general, is a $d$-dimensional Riemannian manifold with boundary. For the action (\ref{lec29a}) the corresponding field equations are \beq (\Delta -V(x)) \Phi (x) =0 . \label{lec29b}\eeq
If $M_s$ has a boundary $\partial M_s$, these equations of motion have to be supplemented by boundary conditions on $\partial M_s$. Along the circle, for a scalar field, periodic boundary conditions are imposed.

Physical properties like the Casimir energy of the system are conveniently  described by means of the path-integral functionals \beq Z [V] = \int  e^{-S [\Phi ] } \,\, D\Phi , \label{lec29c}\eeq
where we have neglected an infinite normalization constant, and the functional integral is to be taken over all fields satisfying the boundary conditions. Formally, equation (\ref{lec29c}) is easily evaluated to be
\beq  \Gamma [V] = - \ln Z[V] = \frac 1 2 \ln \det \left[ ( - \Delta + V(x) )/\mu^2\right] , \label{lec29d}\eeq
where $\mu$ is an arbitrary parameter with dimension of a mass to adjust the dimension of the arguments of the logarithm.
\begin{exercise}
In order to motivate equation (\ref{lec29d}) show that for $P$ a positive definite Hermitian $(N\times N)$-matrix one has
\beq \intl_{\R^n} e^{-\frac 1 2 (x,Px)} (dx) = (\det P)^{-1/2},\nn\eeq where $(dx) = d^nx (2\pi)^{-n/2}$. For $P=-\Delta + V(x)$ and interpreting the scalar product $(x,Px)$ as an $L^2 (M)$-product, one is led to (\ref{lec29d}) by identifying $D\Phi$ with $(dx)$.
\end{exercise} Clearly equation (\ref{lec29d}) is purely formal because the eigenvalues $\lambda_n$ of $-\Delta + V(x)$ grow without bound for $n\to\infty$ and thus expression (\ref{lec29d}) needs further explanations.

In order to motivate the basic definition let $P$ be a Hermitian $(N\times N)$-matrix with positive eigenvalues $\lambda_n$. Clearly $$\left.\ln \det P = \sum_{n=1}^N \ln \lambda_n = - \frac d {ds} \sum_{n=1}^N \lambda_n^{-s} \right|_{s=0} =  \left. -\frac d {ds} \zeta_P (s) \right|_{s=0} , $$ and the determinant of $P$ can be expressed in terms of the zeta function associated with $P$. This very same definition, namely \beq \ln \det P = - \zeta_P ' (0) \label{lec29e}\eeq with \beq \zeta_P (s) = \sum_{n=1}^\infty \lambda_n^{-s}\label{lec29ea}\eeq is now applied to differential operators as in (\ref{lec29d}). Here, the series representation is valid for $\Re s$ large enough, and in (\ref{lec29e}) the unique analytical continuation of the series to a neighborhood about $s=0$ is used.

 This definition was first used by the mathematicians Ray and Singer \cite{ray71-7-145} to give a definition of the Reidemeister-Franz torsion. In physics, this regularization scheme took its origin in ambiguities of dimensional regularization when applied to quantum field theory in curved spacetime \cite{dowk76-13-3224,hawk77-55-133}. For applications beyond the ones presented here see, e.g., \cite{buch92b,byts03b,dett92-377-252,dunn08-41-304006,dunn05-94-072001,espo94b,espo98b,sach92-65-652}.

The quantity $\Gamma [V]$ is called the effective action and the argument $V$ indicates the dependence of the effective action on the external fields. The Casimir energy is obtained from the effective action via \beq E=\frac \partial {\partial \beta} \Gamma [V] = - \frac 1 2 \frac \partial {\partial \beta} \zeta_{P/\mu^2} ' (0) . \label{lec29f}\eeq
Here, we will only consider the zero temperature Casimir energy \beq E_{Cas} = \lim_{\beta \to \infty} E \label{lec29fa}\eeq and we will next derive a suitable representation for $E_{Cas}$. We want to concentrate on the influence of boundary conditions and therefore we set $V(x) =0$. The relevant operator to be considered therefore is $$P = - \frac{\partial^2}{\partial \tau^2} - \Delta_s$$ where $\tau \in S^1$ is the imaginary time and $\Delta_s$ is the Laplace operator on $M_s$. In order to analyze the zeta function associated with $P$ we note that eigenfunctions, respectively eigenvalues, are of the form
\beq \phi_{n,j} (\tau , y) &=& \frac 1 \beta e^{\frac{2\pi in} \beta \tau } \varphi_j (y) , \nn\\
\lambda_{n,j} &=& \left( \frac{ 2\pi n} \beta \right)^2 + E_j ^2, \quad \quad n\in\Z,\nn\eeq
with $$-\Delta_s \varphi _j (y) = E_j ^2 \varphi_j (y),$$
where $y\in M_s$.
For the non-selfinteracting case considered here, $E_j$ are the one-particle energy eigenvalues of the system. The relevant zeta function therefore has the structure \beq \zeta_P (s) = \sum_{n=-\infty}^\infty \sum_{j=1}^\infty \left( \left( \frac{2\pi n} \beta \right)^2 + E_j^2\right)^{-s} . \label{lec29g}\eeq
We repeat the analysis outlined previously, namely we use equation (\ref{Gamma}) and we apply Lemma \ref{poisson} to the $n$-summation. In this process the zeta function $$\zeta_{P_s} (s) = \sum_{j=1}^\infty E_j^{-2s}$$ and the heat kernel $$K_{P_s} (t) = \sum_{j=1}^\infty e^{-E_j^2 t} \sim \sum_{k=0,1/2,1,...}^\infty a_k \,\, t^{k-\frac d 2}$$ of the spatial section are the most natural quantities to represent the answer,
\beq
\zeta_P (s) &=& \frac 1 {\G (s)} \snuu \intl_0^\infty t^{s-1}
     e^{-\left(\frac{2\pi n } \beta \right)^2 t } K_{P_s} (t)\,dt \nn\\
&=& \frac \beta {\sqrt{4\pi}} \frac{\G (s-1/2)}{\G (s)} \zeta_{P_s} (s-1/2)
            \nn\\
  & &
      + \frac \beta {\sqrt{\pi}\,\,\G (s)} \sneu \intl_0^\infty
 t^{s-3/2} e^{-\frac{n^2\beta^2}{4t}} K_{P_s} (t)\,dt .\nn
\eeq
For the Casimir energy we need ($D=d+1$)
\beq
\zeta ' _{P / \mu^2} (0) &=& \zeta_P ' (0) +\zeta_P (0) \ln \mu^2 \nn\\
&= &-\beta \left( FP\,\,\zeta_{P_s} (-1/2) +
        2(1-\ln 2) \Res \zeta_{P_s} (-1/2) \right.\nn\\
  & &\left. -\frac 1 \beta \zeta_P (0)
        \ln \mu^2 \right)
   +   \frac \beta {\sqrt{\pi} } \sneu \intl_0^\infty t^{-3/2}
                 e^{-\left(\frac{n^2\beta^2}{4t}\right)}K_{P_s} (t) \,dt\nn\\
  &=& -\beta \left(FP\,\,\zeta_{P_s} (-1/2)
-   \frac 1 {\sqrt{4\pi}} a_{D/2} \left[(\ln\mu^2) +
      2 (1-\ln 2 ) \right]\right) \nn\\
   & &        +
               \frac \beta {\sqrt{\pi} } \sneu \intl_0^\infty t^{-3/2}
                 e^{-\left(\frac{n^2\beta^2}{4t}\right)}K_{P_s} (t)\, dt,
               \label{lec29h}
\eeq
with the finite part $FP$ of the zeta function and where equations (\ref{zeheres}) and (\ref{zehefun}) together with the fact that $$K_M (t) = K_{S^1} (t) \,\, K_{P_s} (t)$$ have been used, in particular
\beq \mbox{Res }\zeta_{P_s} \left( - \frac 1 2 \right) = - \frac{ a_{D/2}} {2\sqrt \pi} , \quad \quad \zeta_P (0) = \frac \beta {\sqrt{4\pi}} a_{D/2} . \label{relzetcoe}\eeq
At $T=0$ we obtain for the Casimir energy, see equations (\ref{lec29f}) and (\ref{lec29fa}),
\beq
E_{Cas} = \lim_{\beta \to\infty} E = \frac 1 2 FP\,\,\zeta_{P_s} (-1/2)
      -\frac 1 {2\sqrt{4\pi}} a_{D/2}
        \ln \tilde \mu^2 , \label{lec29i}
\eeq
with the scale $\tilde \mu = (\mu e / 2)$. Equation (\ref{lec29i}) implies that as long as $a_{D/2} \neq 0$ the Casimir energy contains a finite ambiguity and renormalization issues need to be discussed. Note from (\ref{relzetcoe}) that whenever $\zeta_{P_s} (-1/2)$ is finite no ambiguity exists because $a_{D/2}=0$. In the specific examples chosen later we will make sure that these ambiguities are absent and therefore a discussion of renormalization will be unnecessary.

In a purely formal calculation one essentially is also led to equation (\ref{lec29i}). As mentioned, in the quantum field theory of a free scalar field the eigenvalues of a Laplacian are the square of the energies of the quantum fluctuations.
Writing the Casimir energy as (one-half) the sum over the energy of all quantum fluctuations one has
\beq E_{Cas} = \frac 1 2 \sum_{k=0}^\infty \lambda_k^{1/2}, \label{lec30}\eeq
and a formal identification 'shows'
\beq E_{Cas} = \frac 1 2 \zeta_{P_s} \left( - \frac 1 2 \right).\label{lec31}\eeq
Clearly, the expression (\ref{lec30}) is purely formal as the series diverges. However, when $\zeta_{P_s} (-1/2)$ turns out to be finite this formal identification yields the correct result. Otherwise, the ambiguities given in (\ref{lec29i}) remain as discussed above.

An alternative discussion leading to definition (\ref{lec29i}) can be found in \cite{blau88-310-163}.

As a first example let us consider the configuration of two parallel plates a distance $a$ apart analyzed originally by Casimir \cite{casi48-51-793}. For simplicity we concentrate on a scalar field instead of the electromagnetic field and we impose Dirichlet boundary conditions on the plates. The boundary value problem to be solved therefore is $$-\Delta u_k (x,y,z) = \lambda_k u_k (x,y,z) $$ with $$u_k(0,y,z) = u_k(a,y,z) =0.$$ For the time being, we compactify the $(y,z)$-directions to a torus with perimeter length $R$ and impose periodic boundary conditions in these directions. Later on, the limit $R\to \infty$ is performed to recover the parallel plate configuration. Using separation of variables one obtains normalized eigenfunctions in the form
$$u_{\ell_1 \ell_2 \ell} (x,y,z) = \sqrt{\frac 2 {aR^2}} \sin \left( \frac{\pi \ell} a x\right) e^{i \frac {2\pi\ell_1 y} R }  e^{i \frac {2\pi\ell_2 z} R } $$ with eigenvalues $$\lambda_{\ell _1 \ell _2 \ell } = \left( \frac{ 2\pi \ell _1} R \right)^2+\left( \frac{ 2\pi \ell _2} R \right)^2+
\left( \frac{ \pi \ell } a \right)^2 , \quad (\ell _1, \ell _2) \in \Z^2, \quad \ell \in \N.$$
This means we have to study the zeta function
\beq \zeta (s) = \sum_{(\ell _1, \ell _2) \in \Z^2} \,\,\sum_{\ell =1}^\infty \left[ \left( \frac{2\pi \ell_1} R \right)^2 + \left( \frac{2\pi \ell_2} R \right)^2 + \left( \frac{\pi \ell} a \right)^2 \right]^{-s} . \label{lec32}\eeq
As $R\to\infty$ the Riemann sum turns into an integral and we compute using polar coordinates in the $(y,z)$-plane \beq \zeta (s) &=& \left( \frac R {2\pi}\right)^2 \sum_{\ell =1}^\infty \,\,\int\limits_{-\infty} ^\infty \,\, \int\limits_{-\infty} ^\infty  \left[ k_1^2+k_2^2 + \left(\frac{\pi \ell} a \right)^2 \right]^{-s} \,\,dk_2 \,\, dk_1\nn\\
&=& \left( \frac R {2\pi}\right)^2 \sum_{\ell =1}^\infty 2\pi \int\limits_0^\infty  k \left[ k^2 + \left( \frac{\pi \ell} a \right)^2 \right]^{-s} \,\,dk \nn\\
&=& \left.\frac{R^2} {2\pi} \frac 1 {2 (1-s)} \sum_{\ell =1}^\infty \left[ k^2 + \left( \frac{\pi \ell} a \right)^2 \right]^{-s+1} \right|_0^\infty \nn\\
&=& - \frac{R^2} {4\pi (1-s)} \sum_{\ell =1}^\infty \left( \frac{\pi \ell} a \right)^{2 (-s+1)} \nn\\
&=& - \frac{R^2}{4\pi (1-s)} \left( \frac \pi a \right)^{2-2s} \zeta_R (2s-2).\nn\eeq
Setting $s=-1/2$ as needed for the Casimir energy we obtain \beq \zeta \left( - \frac 1 2 \right) = - \frac{R^2} {4\pi } \,\, \frac 2 3 \,\, \left( \frac \pi a \right)^3 \zeta_R (-3) = - \frac{ R^2 \pi^2} { 720 a^3} . \label{lec33} \eeq
The resulting Casimir force {\it per area} is \beq F_{Cas} = -\frac \partial {\partial a} \frac{E_{Cas}}{R^2} = -\frac{\pi^2} {480 a^4} . \label{lec34}\eeq
Note, that this computation takes into account only those quantum fluctuations from between the plates. But in order to find the force acting on the, say, right plate the contribution from the right to this plate also has to be counted. To find this part we place another plate at the position $x=L$ where at the end we take $L\to\infty$. Following the above calculation, we simply have to replace $a$ by $L-a$ to see that the associated zeta function produces $$ \zeta \left( - \frac 1 2 \right) = - \frac{R^2 \pi^2}{720 (L-a)^3} $$ and the contribution to the force on the plate at $x=a$ reads $$F_{Cas} =  \frac{\pi^2} {480 (L-a)^4}.$$
This shows the plate at $x=a$ is always attracted to the closer plate. As $L\to\infty$ it is seen that equation (\ref{lec34}) also describes the total force on the plate at $x=a$ for the parallel plate configuration.
\begin{exercise}
Consider the Casimir energy that results in the previous discussion when the compactification length $R$ is kept finite. Use
Lemma \ref{bessel1} to give closed answers for the energy and the resulting force. Can the force change sign depending on $a$ and $R$?
\end{exercise}
More realistically plates will have a finite extension. An interesting setting that we are able to analyze with the tools provided are pistons. These have received an increasing amount of interest because they allow the unambiguous prediction of forces \cite{cava04-69-065015,hert05-95-250402,kirs09-79-065019,mara07-75-085019,teo09-672-190}.

Instead of having parallel plates let us consider a box with side lengths $L_1,L_2$ and $L_3$. Although it is possible to find the Casimir force acting on the plate at $x=L_1$ resulting from the interior of the box, the exterior problem has remained unsolved until today. No analytical procedure is known that allows to obtain the Casimir energy or force for the outside of the box. This problem is avoided by adding on another box with side lengths $L-L_1,L_2$ and $L_3$ such that the wall at $x=L_1$ subdivides the bigger box into two chambers. The wall at $x=L_1$ is assumed to be movable and is called the piston. Each chamber can be dealt with separately and total energies and forces are obtained by adding up the two contributions. Assuming again Dirichlet boundary conditions and starting with the left chamber, the relevant spectrum reads
 \beq \lambda_{\ell _1 \ell _2 \ell _3} = \left( \frac{\pi \ell _1} {L_1}\right)^2+\left( \frac{\pi \ell _2} {L_2}\right)^2+\left( \frac{\pi \ell _3} {L_3}\right)^2, \quad \quad \ell_1, \ell _2,\ell _3 \in \N, \label{lec34a}\eeq and the associated zeta function is \beq \zeta (s) = \sum_{\ell _1,\ell _2,\ell _3 \in \N} \left[\left( \frac{\pi \ell _1} {L_1}\right)^2+\left( \frac{\pi \ell _2} {L_2}\right)^2+\left( \frac{\pi \ell _3} {L_3}\right)^2\right]^{-s}. \label{lec35}\eeq
One way to proceed is to rewrite (\ref{lec35}) in terms of the Epstein zeta function in Definition \ref{epdef}.
\begin{exercise}
Use Lemma \ref{bessel1} in order to find the Casimir energy for the inside of the box with side lengths $L_1,L_2$ and $L_3$ and with Dirichlet boundary conditions imposed.
\end{exercise}
Instead of using Lemma \ref{bessel1} we proceed as follows. We write first \beq \zeta (s) &=& \frac 1 2 \sum_{\ell _1 = - \infty}^\infty \,\,\sum_{\ell _2, \ell_3 =1}^\infty \left[ \left( \frac{\pi \ell _1} {L_1}\right)^2 + \left( \frac{ \pi \ell _2} {L_2} \right)^2 + \left( \frac{\pi \ell_3} {L_3} \right)^2 \right] ^{-s} \nn\\
& -&  \frac 1 2 \sum_{\ell _2, \ell _3 =1}^\infty \left[ \left( \frac{\pi \ell_2} {L_2} \right)^2 + \left( \frac{\pi \ell _3}{L_3}\right)^2 \right]^{-s} . \label{lec36} \eeq
This shows that it is convenient to introduce \beq \zeta _{\mathcal C}(s) = \sum_{\ell _2, \ell _3 =1}^\infty \left[ \left( \frac{\pi \ell_2} {L_2}\right)^2 + \left( \frac{\pi \ell _3}{L_3}\right)^2 \right]^{-s} . \label{lec37}\eeq
We note that this could be expressed in terms of the Epstein zeta function given in Definition \ref{epdef}. However, it will turn out that this is unnecessary.

Also, to simplify the notation let us introduce $$\mu_{\ell_2 \ell_3} ^2 = \left( \frac{ \pi \ell_2}{L_2}\right)^2 + \left( \frac{ \pi \ell_3}{L_3}\right)^2 . $$ Using equation (\ref{Gamma}) for the first line in (\ref{lec36}) we continue \beq \zeta (s) &=& \frac 1 {2 \Gamma (s)} \sum_{\ell_1=-\infty}^\infty \,\,\sum_{\ell_2, \ell_3 =1}^\infty \int\limits_0^\infty  t^{s-1} \exp \left\{ - t \left[ \left( \frac{\pi \ell_1}{L_1}\right)^2 + \mu_{\ell_2\ell_3}^2 \right] \right\} dt \nn\\
&- &\frac 1 2 \zeta_{\mathcal C} (s).\nn\eeq
We now apply the Poisson resummation in Lemma \ref{poisson} to the $\ell_1$-summation and therefore we get
\beq \zeta (s) &=& \frac {L_1} {2\sqrt \pi \,\,\Gamma (s)} \sum_{\ell_1 =-\infty} ^\infty \,\,\sum_{\ell_2, \ell_3=1}^\infty \int\limits_0^\infty  t^{s-\frac 3 2 } \exp \left\{ - \frac{L_1^2 \ell_1^2} t - t \mu_{\ell_2\ell_3}^2\right\} dt\nn\\
& -& \frac 1 2 \zeta_{\mathcal C} (s).\label{lec37a}\eeq
The $\ell_1 =0$ term gives a $\zeta _{\mathcal C}$-term, the $\ell_1\neq 0$ terms are rewritten using (\ref{bessel}). The outcome reads \beq \zeta (s) &=& \frac{L_1 \Gamma \left( s-\frac 1 2 \right)}{2 \sqrt \pi \,\,\Gamma (s)} \zeta _{\mathcal C} \left( s- \frac 1 2 \right) - \frac 1 2 \zeta _{\mathcal C} (s) \label{lec38} \\
&+& \frac{2 L_1 ^{s+ \frac 1 2 }}{\sqrt \pi \,\,\Gamma (s)} \sum_{\ell_1, \ell _2, \ell _3 =1}^\infty \left(\frac{\ell_1^2}{\mu_{\ell_2 \ell_3}^2} \right)^{\frac 1 2 \left( s- \frac 1 2\right)} K_{\frac 1 2 -s } \left( 2 L_1 \ell_1 \mu_{\ell_2 \ell_3} \right).\nn\eeq We need the zeta function about $s=-1/2$ in order to find the Casimir energy and Casimir force.

Let $s=-1/2+\epsilon$. In order to expand equation (\ref{lec38}) about $\epsilon =0$ we need to know the pole structure of $\zeta_{\mathcal C} (s)$. From equation (\ref{bessel1}) it is expected that $\zeta_{\mathcal C} (s)$ has at most a first order pole at $s=-1/2$ and that it is analytic about $s=-1$. So for now let us simply assume the structure \beq \zeta_{\mathcal C} \left( - \frac 1 2 + \epsilon \right) &=& \frac 1 \epsilon \mbox{Res } \zeta_{\mathcal C} \left(  - \frac 1 2 \right) + \mbox{FP } \zeta_{\mathcal C} \left( - \frac 1 2 \right) + {\mathcal O} (\epsilon ), \nn\\
\zeta _{\mathcal C} (-1+\epsilon ) &=& \zeta_{\mathcal C} (-1) + \epsilon \zeta_{\mathcal C} ' (-1) + {\mathcal O} (\epsilon ^2), \nn\eeq
where $\mbox{Res } \zeta_{\mathcal C} (-1/2)$ and $\mbox{FP }\zeta_{\mathcal C} (-1/2)$ will be determined later.
With this structure assumed, we find \beq \zeta \left( - \frac 1 2 + \epsilon \right) &=& \frac 1 \epsilon \left( \frac{L_1}{4\pi} \zeta_{\mathcal C} (-1) - \frac 1 2 \mbox{Res } \zeta_{\mathcal C} \left( - \frac 1 2 \right) \right)\nn\\
& & + \frac{L_1} {4\pi} \left( \zeta_{\mathcal C} ' (-1)
+ \zeta _{\mathcal C} (-1) ( \ln 4 -1)\right) - \frac 1 2 \mbox{FP } \zeta_{\mathcal C} \left( - \frac 1 2 \right) \label{lec39} \\
& &-\frac 1 \pi \sum_{\ell_1, \ell_2, \ell_3 =1}^\infty \left| \frac{\mu_{\ell_2 \ell_3}}{\ell_1} \right| K_1 \left( 2 L_1 \ell_1 \mu_{\ell_2 \ell_3}\right) .\nn\eeq
This shows that the Casimir energy for this setting is unambiguously defined only if
$\zeta_{\mathcal C} (-1) =0$ and $\mbox{Res }\zeta_{\mathcal C} (-1/2)=0$.
\begin{exercise} \label{exep2} Show the following analytical continuation for $\zeta_{\mathcal C} (s)$:
\beq \zeta_{\mathcal C} (s) &=& - \frac 1 2 \left( \frac{L_3} \pi \right)^{2s} \zeta_R (2s) + \frac{L_2 \Gamma \left( s - \frac 1 2 \right)}
{2 \sqrt \pi \,\,\Gamma (s)} \left( \frac{L_3} \pi \right)^{2s-1} \zeta_R (2s-1) \label{lec40} \\
& &+ \frac{2 L_2 ^{s+1/2}}{\sqrt \pi \,\,\Gamma (s)} \sum_{\ell_2=1}^\infty \sum_{\ell _3 =1}^\infty \left( \frac{\ell_2 L_3}{\pi \ell_3}\right)^{s-1/2} K_{\frac 1 2 -s} \left( \frac{2\pi L_2 \ell_2 \ell_3} {L_3} \right) \nn.\eeq
Read off that $\zeta_{\mathcal C} (-1) = \mbox{Res } \zeta _{\mathcal C} (-1/2) =0$. \end{exercise}
Using the results from Exercise \ref{exep2} the Casimir energy, from equation (\ref{lec39}), can be expressed as
\beq E_{Cas} &=& \frac{L_1}{8\pi} \zeta_{\mathcal C} ' (-1) - \frac 1 4 \mbox{FP } \zeta_{\mathcal C} \left( - \frac 1 2 \right) \label{lec41}\\
&-& \frac 1 {2\pi} \sum_{\ell_1, \ell _2, \ell_3 =1}^\infty \left| \frac{\mu_{\ell_2 \ell_3}}{\ell_1} \right| K_1 (2L_1 \ell_1 \mu_{\ell _2 \ell_3}) . \nn\eeq
\begin{exercise} \label{exenergy} Use representation (\ref{lec40}) to give an explicit representation of the Casimir energy (\ref{lec41}).\end{exercise}
For the force this shows \beq F_{Cas} = - \frac 1 {8\pi} \zeta_{\mathcal C} ' (-1) + \frac 1 {2\pi } \sum_{\ell_1 , \ell_2, \ell _3 =1}^\infty \left| \frac{ \mu_{\ell_2 \ell_3}}{\ell_1} \right| \frac \partial {\partial L_1} K_1 (2 L_1 \ell_1 \mu_{\ell_2 \ell_3} ) .\label{lec42}\eeq
\begin{exercise} \label{exkelvin} Use Definition \ref{bessel} to show that $K_\nu (x)$ is a monotonically decreasing function for $x\in\R_+$.\end{exercise}
\begin{exercise} Determine the sign of $\zeta_{\mathcal C} ' (-1)$. What is the sign of the Casimir force as $L_1 \to \infty$? What about $L_1 \to 0$?\end{exercise}
Remember that the results given describe the contributions from the interior of the box only. The contributions from the right chamber are obtained by replacing $L_1$ with $L-L_1$. This shows for the right chamber
\beq E_{Cas} &=& \frac{L-L_1}{8\pi} \zeta _{\mathcal C} ' (-1) - \frac 1 4 \mbox{FP } \zeta _{\mathcal C} \left( - \frac 1 2 \right) \nn\\
& -& \frac 1 {2\pi} \sum_{\ell_1, \ell_2, \ell_3=1}^\infty \left| \frac{ \mu_{\ell _2 \ell_3}}{\ell_1} \right| K_1 (2 (L-L_1) \ell_1 \mu_{\ell_2 \ell_3} ) ,\nn\\
F_{Cas} &=& \frac 1 {8\pi} \zeta_{\mathcal C} ' (-1) + \frac 1 {2\pi} \sum_{\ell_1 , \ell_2, \ell_3 =1}^\infty \left| \frac{\mu_{\ell_2 \ell_3}}{\ell_1} \right| \frac{\partial}{\partial L_1} K_1 (2 (L-L_1) \ell_1 \mu_{\ell_2 \ell_3}) .\nn\eeq
 Adding up, the total force on the piston is
 \beq F_{Cas}^{tot} &=& \frac 1 {2\pi } \sum_{\ell_1 , \ell_2, \ell _3 =1}^\infty \left| \frac{ \mu_{\ell_2 \ell_3}}{\ell_1} \right|
 \frac \partial {\partial L_1}K_1 (2 L_1 \ell_1 \mu_{\ell_2 \ell_3} )\nn\\&+& \frac 1 {2\pi} \sum_{\ell_1 , \ell_2, \ell_3 =1}^\infty \left| \frac{\mu_{\ell_2 \ell_3}}{\ell_1} \right| \frac{\partial}{\partial L_1} K_1 (2 (L-L_1) \ell_1 \mu_{\ell_2 \ell_3}) . \label{lec43}\eeq
 This shows, using the results of Exercise \ref{exkelvin}, that the piston is always attracted to the closer wall.

 Although we have presented the analysis for a piston with rectangular cross-section, our result in fact holds in much greater generality. The fact that we analyzed a rectangular cross-section manifests itself in the spectrum (\ref{lec34a}), namely the part $$\left( \frac{\pi \ell_2} {L_2} \right)^3 + \left( \frac{\pi \ell_3}{L_3} \right)^2 $$ is a direct consequence of it. If instead we had considered an arbitrary cross-section ${\mathcal C}$, the relevant spectrum had the form $$\lambda_{\ell _1 i} = \left( \frac{\pi \ell_1} {L_1} \right)^2 + \mu_i^2,$$ where, assuming still Dirichlet boundary conditions on the boundary of the cross-section ${\mathcal C}$, $\mu_i^2$ is determined from $$ \left.\left( - \frac{ \partial ^2}{\partial y^2} - \frac{\partial ^2}{\partial z^2} \right) \phi_i (y,z) = \mu_i^2 \phi_i (y,z) , \quad \quad \phi_i (y,z) \right|_{(y,z) \in \partial {\mathcal C}} =0.$$ Proceeding in the same way as before, replacing $\mu_{\ell_2 \ell _3}$ with $\mu_i$ and introducing $\zeta_{\mathcal C} (s)$ as the zeta function for the cross-section, $$\zeta_{\mathcal C} (s) = \sum_{i=1}^\infty \mu_i^{-2s},$$ equation (\ref{lec37a}) remains valid, as well as equations (\ref{lec38}) and (\ref{lec39}). So also for an arbitrary cross-section the total force on the piston is described by equation (\ref{lec43}) with the replacements given and the piston is attracted to the closest wall.
 \begin{exercise} In going from equation (\ref{lec37a}) to (\ref{lec38}) the fact that $\mu_{\ell_2 \ell_3 }^2 > 0$ is used. Above we used $\mu_i^2>0$
 which is true because we imposed Dirichlet boundary conditions.
 Modify the calculation if boundary conditions are chosen (like Neumann boundary conditions) that allow for $d_0$ zero modes $\mu_i^2 =0$ \cite{kirs09-79-065019}.\end{exercise}
We have presented the piston set-up for three spatial dimensions, but a similar analysis can be performed in the presence of extra dimensions \cite{kirs09-79-065019}. Once this kind of calculation is fully understood for the electromagnetic field it is hoped that future high-precision measurements of Casimir forces for simple configurations such as parallel plates can serve as a window into properties of the dimensions of the universe that are somewhat hidden from direct observations.

 As we have seen for the example of the piston, there are cases where an unambiguous prediction of Casimir forces is possible. Of course the set-up we have chosen was relatively simple and for many other configurations even the sign of Casimir forces is unknown. This is a very active field of research; some references are \cite{bord01-353-1,emig07-99-170403,full09-42-155402,milt01b,milt04-37-209,scha09-102-060402}. Further discussion is provided in the Conclusions.
 \section{Bose-Einstein condensation of Bose gases in traps}
 We now turn to applications in statistical mechanics. We have chosen to apply the techniques in a quantum mechanical system described by the Schr{\"o}dinger equation \beq\left( - \frac{\hbar^2}{2m} \Delta + V(x,y,z)\right) \phi_k (x,y,z) = \lambda_k \phi_k (x,y,z), \label{lec50a}\eeq
 that is we consider a gas of quantum particles of mass $m$ under the influence of the potential $V(x,y,z)$. Specifically, later we will consider in detail the harmonic oscillator potential $$V(x,y,z) = \frac m 2 \left( \omega_1 x^2 + \omega_2 y^2 + \omega _3 z^2\right) $$ briefly mentioned in Example \ref{exHO}, as well as a gas confined in a finite cavity.

 Thermodynamical properties of a {\it bose} gas, which is what we shall consider in the following, are described by the (grand canonical) partition sum \beq q = - \sum_{k=0}^\infty \ln \left( 1-e^{-\beta (\lambda_k -\mu) } \right) , \label{lec50}\eeq where $\beta$ is the inverse temperature and $\mu$ is the chemical potential. We assume the index $k=0$ labels the unique ground state, that is, the state with smallest energy eigenvalue $\lambda_0$. From this partition sum all thermodynamical properties are obtained. For example the particle number is \beq N= \left.\frac 1 \beta \frac{\partial q}{\partial \mu} \right|_{T,V} = \sum_{k=0}^\infty \frac 1 {e^{\beta (\lambda_k - \mu)} -1} , \label{lec51}\eeq
 where the notation $(\partial q/\partial \mu|_{T,V})$ indicates that the derivative has to be taken with temperature $T$ and volume $V$ kept fixed.
 The particle number is the most important quantity for the phenomenon of Bose-Einstein condensation.
 Although this phenomenon was predicted more than 80 years ago \cite{bose24-26-178,eins24-22-261}
 it was only relatively recently experimentally verified \cite{ande95-269-198,brad95-75-1687,davi95-75-3969}.
Bose-Einstein condensation is one of the most interesting properties of a system of bosons. Namely, under certain conditions it is possible to have a phase transition at a critical value of the temperature in which all of the bosons can condense into the ground state. In order to understand at which temperature the phenomenon occurs a detailed study of $N$, or alternatively $q$, is warranted. This is the subject of this section.

We first note that from the fact that the particle number in each state has to be non-negative it is clear that $\mu<\lambda_0$
 has to be imposed.
 It is seen in (\ref{lec50}) that as $\beta \to 0$ (high temperature limit) the behavior of $q$ cannot be easily understood. But contour integral techniques together with the zeta function information provided makes the analysis feasible and it will allow for the determination of the critical temperature of the bose gas.

Let us start by noting that from $$\ln (1-x) = - \sum_{n=1}^\infty \frac{x^n} n , \quad \quad \mbox{for }|x|<1,$$ the partition sum can be rewritten as \beq q = \sum_{n=1}^\infty \sum_{k=0}^\infty \frac 1 n e^{-\beta (\lambda_k - \mu ) n } . \label{lec52}\eeq
 The $\beta \to 0$ behavior is best found using the following representation of the exponential.
\begin{exercise} \label{expcon} Given that $$\lim_{|y| \to \infty} | \Gamma (x+iy)|
\,\,e^{\frac \pi 2 |y|}\,\, |y|^{\frac 1 2 -x} = \sqrt{2\pi} ,
\quad x,y\in\R ,$$ and $$\Gamma (z) = \sqrt{ 2\pi} e^{\left(
z-\frac 1 2 \right) \log z \,\, - z } \left( 1+ o (1)\right),$$ as
$|z| \to \infty$, show that
\beq e^{-a} = \frac 1 {2\pi i} \int\limits_{\sigma-i\infty} ^{\sigma
+ i\infty} \,\, a^{-t}\,\, \Gamma (t) \,\,dt,\label{lec53}\eeq valid for $\sigma >
0$, $|\mbox{arg}\,\, a| < \frac \pi 2 - \delta $, $0<\delta \leq
 \pi /2$. \end{exercise}
Before we apply this result to the partition sum (\ref{lec52}) let us use a simple example to show how this formula allows us to determine asymptotic behavior of certain series in a relatively straightforward fashion. From Lemma \ref{poisson} we know that
\beq \sum_{\ell =1}^\infty e^{-\beta \ell^2} &=& \frac 1 2 \sum_{\ell =-\infty}^\infty e^{-\beta \ell^2} - \frac 1 2 \nn\\
&=& \frac 1 2 \sqrt{\frac \pi \beta }- \frac 1 2 +  \sqrt{\frac \pi \beta} \sum_{\ell =1}^\infty e^{-\frac{\pi^2} \beta \ell^2} .\label{lec52c}\eeq
As $\beta \to 0$ it is clear that the series on the left diverges and Lemma \ref{poisson} shows that the leading behavior is described by a $1/\sqrt{\beta}$ term, followed by a constant term, followed by exponentially damped corrections. Let us see how we can easily find the polynomial behavior as $\beta \to 0$ from (\ref{lec53}). We first write \beq \sum_{\ell =1}^\infty e^{-\beta \ell ^2} = \sum_{\ell =1}^\infty \frac 1 {2\pi i} \intl _{\sigma - i\infty} ^{\sigma + i\infty} (\beta \ell ^2)^{-t} \Gamma (t)dt. \nn\eeq
Here, $\sigma > 0$ is assumed by Exercise \ref{expcon}. However, in order to be allowed to interchange summation and integration we need to impose $\sigma >1/2$ and find
\beq \sum_{\ell =1}^\infty e^{-\beta \ell ^2} &=& \frac 1 {2\pi i} \intl_{\sigma-i\infty} ^{\sigma+i\infty} \beta^{-t} \Gamma (t) \sum_{\ell =1}^\infty \ell ^{-2t}dt \nn\\
&=& \frac 1 {2\pi i} \intl_{\sigma - i\infty} ^{\sigma + i \infty}  \beta^{-t} \Gamma (t) \zeta_R (2t) dt.\nn\eeq
In order to find the small-$\beta$ behavior, the strategy now is to shift the contour to the left. In doing so we cross over poles of the integrand generating polynomial contributions in $\beta$. For this example, the right most pole is at $t=1/2$ (pole of the zeta function of Riemann) and the next pole is at $t=0$ (from the Gamma-function). Those are all singularities present as $\zeta_R (-2n)=0$ for $n\in\N$.
Therefore, \beq \sum_{\ell =1}^\infty e^{-\beta \ell ^2} &=& \beta^{-\frac 1 2} \,\,\Gamma \left( \frac 1 2 \right) \frac 1 2 + \beta^0 \cdot 1 \cdot \zeta_R (0) +\frac 1 {2\pi i} \intl_{\tilde \sigma -i\infty}^{\tilde \sigma + i\infty}  \beta^{-t} \Gamma (t) \zeta_R (2t)dt \nn\\
&=& \frac 1 2 \sqrt{\frac \pi \beta} - \frac 1 2 + \frac 1 {2\pi i}\intl_{\tilde \sigma -i\infty}^{\tilde \sigma + i\infty}  \beta^{-t} \Gamma (t) \zeta_R (2t) dt,\nn\eeq where $\tilde \sigma <0$ and where contributions from the horizontal lines between $\tilde \sigma \pm i\infty$ and $\sigma\pm i\infty$ are neglected. For the remaining contour integral plus the neglected horizontal lines one can actually show that they will produce the exponentially damped terms as given in (\ref{lec52c}). How exactly this actually happens has been described in detail in \cite{eliz89-40-436}.
\begin{exercise}
Argue how $\sum_{n=1}^\infty \,\, e^{-\beta n^\alpha},$
$\beta
>0, \alpha >0$, behaves as $\beta \to 0$ by using the above procedure. Determine the leading three terms in the expansion assuming that the contributions from the contour at infinity can be neglected.
\end{exercise}
\begin{exercise}
Find the leading three terms of the small-$\beta$ behavior of
$$\sum_{n=1}^\infty \log \left( 1-e^{-\beta n}\right)$$
assuming that the contributions from the contour at infinity can be neglected.
\end{exercise}
Let us next apply the above ideas to the partition sum (\ref{lec52}). As a further warmup, for simplicity, let us first set $\mu =0$. Not specifying $\lambda_k$ for now and using $$\zeta (s) = \sum_{k=0}^\infty \lambda_k^{-s}$$
for $\Re s >M$ large enough to make this series convergent,
we write
\beq q &=& \sum_{n=1}^\infty \sum_{k=0}^\infty \frac 1 n e^{-\beta \lambda_k n} \nn\\
&=& \sum_{n=1}^\infty \sum_{k=0}^\infty \frac 1 n \frac 1 {2\pi i} \intl_{\sigma - i\infty}^{\sigma+i\infty}  (\beta \lambda_k n)^{-t} \Gamma (t) dt \nn\\
&=& \frac 1 {2\pi i} \intl_{\sigma - i\infty}^{\sigma + i \infty} \beta^{-t} \Gamma (t) \left( \sum_{n=1}^\infty n^{-t-1}\right) \left( \sum_{k=0}^\infty \lambda_k ^{-t} \right) dt\nn\\
&=& \frac 1 {2\pi i} \intl _{\sigma -i\infty}^{\sigma + i\infty} \beta^{-t} \Gamma (t) \zeta_R (t+1) \zeta (t)dt .\nn\eeq
Here $\sigma >M$ is needed for the algebraic manipulations to be allowed.
It is clearly seen that the integrand has a double pole at $t=0$. The right most pole (at $M$) therefore comes from $\zeta (t)$, and the location of this pole determines the leading $\beta \to 0$ behavior of the partition sum.

For the harmonic oscillator potential, in the notation of Example \ref{exHO}, the Barnes zeta function occurs and we have
\beq q = \frac 1 {2\pi i} \intl_{\sigma-i\infty}^{\sigma + i\infty}\beta^{-t} \Gamma (t) \zeta_R (t+1) \zeta_{\mathcal B} (t,c|\vec r)dt .\label{lec60}\eeq
The location of the poles and its residues are known for the Barnes zeta function, see Definition \ref{defber} and Theorem \ref{polebarn}, in particular one has $$\mbox{Res } \zeta_{\mathcal B} (3,c| \vec r) = \frac 1 {2 \hbar^3\Omega^3}, $$ where, as is common, the geometric mean of the oscillator frequencies $$\Omega = (\omega_1 \omega_2 \omega_3)^{1/3}$$ has been used. The leading order of the partition sum therefore is
\beq q = \frac{\pi^4}{90} \frac 1 {(\beta \hbar \Omega )^3} + {\mathcal O} (\beta^{-2}).\nn\eeq
\begin{exercise}
Use Definition \ref{defber} and Theorem \ref{polebarn} to find the subleading order of the small-$\beta$ expansion of the partition sum $q$.
\end{exercise}
\begin{exercise}
Consider the harmonic oscillator potential in $d$ dimensions and find the leading and subleading order of the small-$\beta$ expansion of the partition sum $q$.
\end{exercise}
If instead of considering a bose gas in a trap we consider the gas in a finite three-dimensional cavity $M$ with boundary $\partial M$ we have to augment the Schr{\"o}dinger equation (\ref{lec50a}) by boundary conditions. We choose Dirichlet boundary conditions and thus the results for the heat kernel coefficients (\ref{leadcoef}) are valid.

From equation (\ref{zeheres}) we conclude furthermore that the rightmost pole of $\zeta (s)$ is located at $s=3/2$ and that $$\res \zeta \left( \frac 3 2 \right) = \frac{a_0}{\Gamma \left( \frac 3 2 \right) } = \frac{\mbox{vol}(M)}{4\pi^2},$$ furthermore the next pole is located at $s=1$. For this case, the leading order of the partition sum therefore is $$q=\frac 1 {(4\pi\beta)^{3/2}} \zeta_R \left( \frac 5 2 \right) \mbox{vol}(M) + {\mathcal O} (\beta^{-1}).$$ One way to read this result is that the bose gas does know the volume of its container because it can be found from the partition sum. This is completely analogous to the statement for the drum where we used the heat kernel instead of the partition sum.

Subleading orders of the partition sum reveal more information about the cavity, see the following exercise. But as for the drums, the gas does not know all the details of the shape of the cavity because there are different cavities leading to the same eigenvalue spectrum \cite{gord92-27-134}. Those cavities cannot be distinguished by the above analysis.
\begin{exercise}
Consider a bose gas in a $d$-dimensional cavity $M$ with boundary $\partial M$. Use (\ref{leadcoef}) and (\ref{zeheres}) to find the leading and subleading order of the small-$\beta$ expansion of the partition sum $q$. What does the bose gas know about its container, meaning what information about the container can be read of from the high-temperature behavior of the partition sum?
\end{exercise}
In order to examine the phenomenon of Bose-Einstein condensation we have to consider non-vanishing chemical potential. Close to the phase transition, as we will see,
more and more particles have to reside in the ground state and the value of the chemical potential will be close to the smallest eigenvalue, which is the 'critical' value for the chemical potential,
$\mu_c = \lambda_0$. Near the phase transition, for the expansion to be established, it will turn out advantageous to rewrite $\lambda_k-\mu$ such that the small quantity $\mu_c-\mu$ appears, $$\lambda_k-\mu = \lambda_k - \mu_c + \mu_c - \mu = \lambda_k - \lambda_0 + \mu_c - \mu.$$
Given the special role of the ground state, we separate off its contribution and write
\beq q = q_0 +\sum_{n=1}^\infty {\sum_{k=1}^{\,\,\,\infty }}  \,\,\frac 1 n \,e^{-\beta n (\lambda_k -\lambda_0 )} \,\, e^{-\beta n (\mu_c -\mu)} .\nn\eeq
Note that the $k$-sum starts with $k=1$, which means that the ground state is not included in this summation. Employing the representation (\ref{lec53}) only to the first exponential factor and proceeding as before we obtain \beq q = q_0 + \frac 1 {2\pi i} \intl_{\sigma - i\infty} ^{\sigma + i \infty}  \beta^{-t} \Gamma (t) \mbox{Li}_{1+t} \left( e^{-\beta (\mu_c - \mu)}\right) \zeta_0 (t) dt, \label{lec61}\eeq
with the polylogarithm \beq \mbox{Li}_n (x) = \sum_{\ell =1}^\infty \frac{x^\ell} {\ell^n}, \label{lec62}\eeq
and the spectral zeta function \beq \zeta_0 (s) = {\sum_{k=1}^{\,\,\,\infty}}(\lambda_k - \lambda_0)^{-s}.\nn\eeq
In order to determine the small-$\beta$ behavior of expression (\ref{lec61}) let us discuss the pole structure of the integrand.
Given $\mu_c - \mu >0$, the polylogarithm $\mbox{Li}_{1+t} (e^{-\beta (\mu_c -\mu )})$ does not generate any poles. Concentrating on the harmonic oscillator, we find
\beq \mbox{Res } \zeta_0 (3) &=& \frac 1 {2(\hbar \Omega)^3}, \nn\\
\mbox{Res } \zeta_0 (2) &=& \frac 1 {2\hbar^2} \left( \frac 1 {\omega_1 \omega_2} + \frac 1 {\omega_1 \omega_3} + \frac 1 {\omega_2 \omega_3} \right).\nn\eeq
Note that $\zeta_0 (s)$ is the Barnes zeta function as given in Definition \ref{defbarn} with $c=0$ where we have to exclude $\vec m= \vec 0$ from the summation. However, clearly the residues at $s=3$ and $s=2$ can still be obtained from Theorem \ref{polebarn} with $c\to 0$ taken.

Shifting the contour to the left we now find
\beq q &=& q_0 + \frac 1 {(\beta \hbar \Omega)^3} \mbox{Li} _4 \left( e^{-\beta (\mu_c -\mu)}\right)\nn\\
& &+ \frac 1 {2(\beta \hbar)^2} \mbox{Li}_3 \left( e^{-\beta (\mu_c -\mu)}\right) \left( \frac 1 {\omega_1 \omega_2} + \frac 1 {\omega_1 \omega_3} + \frac 1 {\omega_2 \omega_3} \right) + ...\nn\eeq
In order to find the particle number $N$ we need the relation for the polylogarithm \beq \frac{\partial \mbox{Li}_n (x)} {\partial x} = \frac 1 x \mbox{Li}_{n-1} (x),\nn\eeq which follows from (\ref{lec62}). So
\beq N &=& N_0 +\frac 1 {(\beta \hbar \Omega)^3} \mbox{Li} _3 \left( e^{-\beta (\mu_c -\mu)}\right)\nn\\
& &+ \frac 1 {2(\beta \hbar)^2} \mbox{Li}_2 \left( e^{-\beta (\mu_c -\mu)}\right) \left( \frac 1 {\omega_1 \omega_2} + \frac 1 {\omega_1 \omega_3} + \frac 1 {\omega_2 \omega_3} \right) + ...\nn\eeq
\begin{exercise} \label{propol} Use (\ref{lec53}) and (\ref{lec62}) to show $$ \mbox{Li}_n \left( e^{-x}\right) = \zeta_R (n) - x \zeta_R (n-1) + ...$$ valid for $n>2$. What does the subleading term look like for $n=2$?
\end{exercise}
As the critical temperature is approached $\mu\to\mu_c$ and with Exercise \ref{propol} the particle number close to the transition temperature becomes \beq N &=& N_0 +\frac {\zeta_R (3)} {(\beta \hbar \Omega)^3}
 + \frac {\zeta_R (2)} {2(\beta \hbar)^2} \left( \frac 1 {\omega_1 \omega_2} + \frac 1 {\omega_1 \omega_3} + \frac 1 {\omega_2 \omega_3} \right) + ...\label{lec63}\eeq
The second and third term give the number of particles in the {\it excited} levels (at high temperature close to the phase transition).

The critical temperature is defined as the temperature where all excited levels are completely filled such that lowering the temperature the ground state population will start to build up. This means the defining equation for the critical temperature $T_c = 1/\beta _c$ in the approximation considered is
\beq N = \frac 1 {(\beta_c \hbar \Omega)^3} \zeta_R (3) + \frac 1 {2(\beta_c\hbar)^2} \zeta_R (2) \left( \frac 1 {\omega_1 \omega_2} + \frac 1 {\omega_1 \omega_3} + \frac 1 {\omega_2 \omega_3} \right).\eeq
Solving for $\beta_c$ one finds \beq T_c = T_0 \left\{ 1-\frac{\zeta_R (2)} {3 \zeta_R (3)^{2/3}} \,\,\delta \,\, N^{-1/3}\right\}.\nn\eeq
Here, $T_0$ is the critical temperature in the bulk limit ($N\to\infty$)
\beq T_0 = \hbar \Omega \left( \frac N {\zeta_R (3)} \right)^{1/3}\nn\eeq
and \beq \delta = \frac 1 2 \Omega^{2/3} \left( \frac 1 {\omega_1 \omega_2} + \frac 1 {\omega_1 \omega_3} + \frac 1 {\omega_2 \omega_3} \right).\nn\eeq
Different approaches can be used to obtain the same answers \cite{gros95-50-921,gros95-208-188,haug97-225-18,haug97-55-2922}.

If only a few thousand particles are used in the experiment the finite-$N$ correction is actually quite important. For example the first successful experiments on Bose-Einstein condensates were done with rubidium \cite{ande95-269-198} at frequencies $\omega_1 = \omega_2 = 240 \pi / \sqrt 8$ s$^{-1}$ and $\omega_3 = 240 \pi$s$^{-1}$. With $N=2000$ one finds $T_c \sim 31.9$nK$=0.93\,\, T_0$ \cite{kirs96-54-4188}, a significant correction compared to the thermodynamical limit.
\begin{exercise}
Consider the bose gas in a $d$-dimensional cavity. Find the particle number and the critical temperature along the lines described for the harmonic oscillator. What is the correction to the critical temperature caused by the finite size of the cavity? (For a solution to this problem see \cite{kirs99-59-158}.)
\end{exercise}
\section{Conclusions}
In these lectures some basic zeta functions are introduced and used to analyze the Casimir effect and Bose-Einstein condensation for particular situations. The basic zeta functions considered are the Hurwitz, the Barnes and the Epstein zeta function. Although these zeta functions differ from each other they have one property in common: they are based upon a sequence of numbers that is explicitly known and given in closed form. The analysis of these zeta functions and of the indicated applications in physics is heavily based on this explicit knowledge in that well-known summation formulas are used.

In most cases, however, an explicit knowledge of the eigenvalues of, say, a Laplacian will not be available and an analysis of the associated zeta
functions will be more complicated. In recent years a new class of examples where eigenvalues are defined implicitly as solutions to transcendental equations has become accessible. In some detail let us assume that eigenvalues are determined by equations of the form
\beq F_\ell ( \lambda_{\ell ,n}) =0\label{implieigen}\eeq with $\ell ,n$ suitable indices. For example when trying to find eigenvalues and eigenfunctions of the Laplacian whenever possible one resorts to separation of variables and $\ell$ and $n$ would be suitable 'quantum numbers' labeling eigenfunctions. To be specific consider a scalar field in a three dimensional ball of radius $R$ with
Dirichlet boundary conditions. The eigenvalues $\lambda_k$ for
this situation, with $k$ as a multiindex, are thus determined
through
$$-\Delta \phi_k (x) = \lambda_k \phi _k (x), \quad \quad \phi (x)
|_{|x| = R} =0 .$$ In terms of spherical coordinates $(r,\Omega)$,
a complete set of eigenfunctions may be given in the form
$$\phi_{l,m,n} (r, \Omega ) = r^{-1/2} J_{l+1/2}
(\sqrt{\lambda_{l,n} }r ) Y _{l,m} (\Omega) ,$$ where $Y_{l,m}
(\Omega )$ are spherical surface harmonics \cite{erde55b}, and
$J_\nu$ are Bessel functions of the first kind \cite{grad65b}.
Eigenvalues of the Laplacian are determined as zeroes of Bessel
functions. In particular, for a given angular momentum quantum
number $l$, imposing Dirichlet boundary conditions, eigenvalues
$\lambda_{l,n}$ are determined by \beq J_{l+1/2} \left(
\sqrt{\lambda_{l,n} }R\right) =0 .\label{1}\eeq
Although some properties of the zeroes of Bessel functions are well understood \cite{grad65b},
there is no closed form for them available and
we encounter the situation described by (\ref{implieigen}). In order to find properties of the zeta function associated with this kind of boundary value problems the idea is to use the argument principle or Cauchy's residue theorem. For the situation of the ball one writes the zeta function in the form \beq \zeta (s) = \sum_{l=0}^\infty (2l+1) \frac 1 {2\pi i}
\int\limits_\gamma  k^{-2s} \frac
\partial {\partial k} \ln J_{l+1/2} (kR) dk,\label{2}\eeq
where the contour $\gamma$ runs counterclockwise and must enclose
all solutions of (\ref{1}). The factor $(2l+1)$ represents the
degeneracy for each angular momentum $l$ and the summation is over
all angular momenta. The integrand has singularities exactly at the eigenvalues and one can show that the residues are one such that the definition of the zeta function is recovered. More generally, in other coordinate systems, one would have, somewhat symbolically,
\beq\zeta (s) = \sum_j d_j \frac 1 {2\pi i}
\int\limits_\gamma k^{-2s} \frac \partial
{\partial k} \ln F_j (k) dk, \eeq
the task being to
construct the analytical continuation of this object.
The details of the procedure will depend very much
on the properties of the special function $F_j$ that enters, but often all the information needed can be found \cite{kirs02b}.
Nevertheless, for many separable coordinate systems this program has not been performed but efforts are being made in order to obtain yet unknown precise values for the Casimir energy for various geometries.

\section*{Acknowledgements}
This work is supported by the National Science Foundation Grant
PHY-0757791.
Part of the work was done while the author
enjoyed the hospitality and partial support of the
Department of Physics and Astronomy of
the University of Oklahoma. Thanks go in particular to Kimball Milton and his group who
made this very pleasant and exciting visit possible.

\def\cprime{$'$} \def\polhk#1{\setbox0=\hbox{#1}{\ooalign{\hidewidth
  \lower1.5ex\hbox{`}\hidewidth\crcr\unhbox0}}}



\end{document}